\newcommand{\li}{\ar@{-}} 
\newcommand{\lp}{\ar@{.}} 
\newcommand{\fp}{\ar@{.>}} 
\newcommand{\declassify}{\mbox{declassify}}
\newcounter{todos}
\def\post{\mbox{\sl post}}
\def\pre{\mbox{\sl pre}}
\newcommand{\nint}{\mathbb{Z}}
\newcommand{\val}{\mathbb{V}}
\newcommand{\typ}{\mathbb{T}}
\newcommand{\modu}{\ \mbox{\sl mod}\ }
\newcommand{\uco}{\mbox{\sl uco}}
\newcommand{\id}{\mbox{\sl id}}
\newcommand{\Var}{\mbox{\sl Var}}
\newcommand{\ra}{\rightarrow}
\newcommand{\la}{\leftarrow}
\newcommand{\Ra}{\Rightarrow}
\newcommand{\Lra}{\Leftrightarrow}
\newcommand{\undist}[2]{\Upsilon_{\mbox{\tiny $#1$}}^{\mbox{\tiny $#2$}}}
\newcommand{\Secr}[2]{\mbox{\sl Secr}_{\mbox{\tiny $#1$}}^{#2}}
\def\defi{\mbox{\raisebox{0ex}[1ex][1ex]{$\stackrel{\mbox{\tiny
def}}{\; =\;}$}}}
\def\ok#1{\mbox{\raisebox{0ex}[1ex][1ex]{$#1$}}}
\newcommand{\comment} [1]{}
\def\Sign{\mbox{\sl Sign\/}}
\def\Par{\mbox{\sl Par\/}}
\def\comp{\mathrel{\hbox{\footnotesize${}\!{\circ}\!{}$\normalsize}}}
\def\grasstr#1{\langle\!|#1|\!\rangle}
\def\grass#1{\llbracket#1\rrbracket}
\def\defemb#1#2{\expandafter\def\csname #1\endcsname
                              {\relax\ifmmode #2\else\hbox{$#2$}\fi}}
\def\2c-math#1#2{{\par\medskip\noindent ${#1}$
                      \par\smallskip
                        \noindent\hspace*{\fill} ${#2}$}
                           \\[10pt]}
\def\cal{\mathcal}			   
\newcommand{\sset}[2]{\left\{~#1  \left |
                               \begin{array}{l}#2\end{array}
                          \right.     \right\}}
\newcommand{\dsecr}[3]{\mbox{\small $[#1]$}#2\mbox{\small $(#3)$}}
\newcommand{\asecr}[3]{\mbox{\small $(#1)$}#2\mbox{\small $(#3)$}}
\newcommand{\gsecr}[4]{\mbox{\small $(#1)$}#2\mbox{\small $(#3\rightsquigarrow\!\!\![\!] #4)$}}
\newcommand{\gnsecr}[4]{\mbox{\small $(#1)$}#2\mbox{\small $(#3\Ra #4)$}}
\newcommand{\ifc}{\mbox{\bf if}}
\newcommand{\thenc}{\mbox{\bf then}}
\newcommand{\elsec}{\mbox{\bf else}}
\newcommand{\nil}{\mbox{\bf skip}}
\newcommand{\while}{\mbox{\bf while}}
\newcommand{\dow}{\mbox{\bf do}}
\newcommand{\ew}{\mbox{\bf endw}}
\def\tuple#1{\langle #1 \rangle}
\newcommand{\fP}{\grass{P}} 
\newcommand{\ANI}{\mbox{\tiny{\sl ANI}}}
\newcommand{\NANI}{\mbox{\tt NANI}}
\newcommand{\Nats}{\mathsf{Nats}}
\newcommand{\Wlp}{\mbox{\sl Wlp}}
\newcommand{\f}{\ar@{->}} 
\DeclareSymbolFont{italics}{OT1}{cmr}{m}{it}
\DeclareMathSymbol{a}{\mathalpha}{italics}{"61}
\DeclareMathSymbol{b}{\mathalpha}{italics}{"62}
\DeclareMathSymbol{c}{\mathalpha}{italics}{"63}
\DeclareMathSymbol{d}{\mathalpha}{italics}{"64}
\DeclareMathSymbol{e}{\mathalpha}{italics}{"65}
\DeclareMathSymbol{f}{\mathalpha}{italics}{"66}
\DeclareMathSymbol{g}{\mathalpha}{italics}{"67}
\DeclareMathSymbol{h}{\mathalpha}{italics}{"68}
\DeclareMathSymbol{i}{\mathalpha}{italics}{"69}
\DeclareMathSymbol{j}{\mathalpha}{italics}{"6A}
\DeclareMathSymbol{k}{\mathalpha}{italics}{"6B}
\DeclareMathSymbol{l}{\mathalpha}{italics}{"6C}
\DeclareMathSymbol{m}{\mathalpha}{italics}{"6D}
\DeclareMathSymbol{n}{\mathalpha}{italics}{"6E}
\DeclareMathSymbol{o}{\mathalpha}{italics}{"6F}
\DeclareMathSymbol{p}{\mathalpha}{italics}{"70}
\DeclareMathSymbol{q}{\mathalpha}{italics}{"71}
\DeclareMathSymbol{r}{\mathalpha}{italics}{"72}
\DeclareMathSymbol{s}{\mathalpha}{italics}{"73}
\DeclareMathSymbol{t}{\mathalpha}{italics}{"74}
\DeclareMathSymbol{u}{\mathalpha}{italics}{"75}
\DeclareMathSymbol{v}{\mathalpha}{italics}{"76}
\DeclareMathSymbol{w}{\mathalpha}{italics}{"77}
\DeclareMathSymbol{x}{\mathalpha}{italics}{"78}
\DeclareMathSymbol{y}{\mathalpha}{italics}{"79}
\DeclareMathSymbol{z}{\mathalpha}{italics}{"7A}
\DeclareMathSymbol{A}{\mathalpha}{italics}{"41}
\DeclareMathSymbol{B}{\mathalpha}{italics}{"42}
\DeclareMathSymbol{C}{\mathalpha}{italics}{"43}
\DeclareMathSymbol{D}{\mathalpha}{italics}{"44}
\DeclareMathSymbol{E}{\mathalpha}{italics}{"45}
\DeclareMathSymbol{F}{\mathalpha}{italics}{"46}
\DeclareMathSymbol{G}{\mathalpha}{italics}{"47}
\DeclareMathSymbol{H}{\mathalpha}{italics}{"48}
\DeclareMathSymbol{I}{\mathalpha}{italics}{"49}
\DeclareMathSymbol{J}{\mathalpha}{italics}{"4A}
\DeclareMathSymbol{K}{\mathalpha}{italics}{"4B}
\DeclareMathSymbol{L}{\mathalpha}{italics}{"4C}
\DeclareMathSymbol{M}{\mathalpha}{italics}{"4D}
\DeclareMathSymbol{N}{\mathalpha}{italics}{"4E}
\DeclareMathSymbol{O}{\mathalpha}{italics}{"4F}
\DeclareMathSymbol{P}{\mathalpha}{italics}{"50}
\DeclareMathSymbol{Q}{\mathalpha}{italics}{"51}
\DeclareMathSymbol{R}{\mathalpha}{italics}{"52}
\DeclareMathSymbol{S}{\mathalpha}{italics}{"53}
\DeclareMathSymbol{T}{\mathalpha}{italics}{"54}
\DeclareMathSymbol{U}{\mathalpha}{italics}{"55}
\DeclareMathSymbol{V}{\mathalpha}{italics}{"56}
\DeclareMathSymbol{W}{\mathalpha}{italics}{"57}
\DeclareMathSymbol{X}{\mathalpha}{italics}{"58}
\DeclareMathSymbol{Y}{\mathalpha}{italics}{"59}
\DeclareMathSymbol{Z}{\mathalpha}{italics}{"5A}
\newcommand{\even}{\mbox{\tt ev}}
\newcommand{\odd}{\mbox{\tt od}}
\newcommand{\rel}{*}
\newcommand{\obs}{\circ}
\newcommand{\Pinp}{\cI}
\newcommand{\Poup}{\cO}
\newcommand{\mtI}{\mathbb{I}}
\newcommand{\mtO}{\mathbb{O}}
\newcommand{\inp}{\mtI}
\newcommand{\oup}{\mtO}
\newcommand{\Prelin}{\Pinp_{\rel}}
\newcommand{\Prelout}{\Poup_{\rel}}
\newcommand{\Pobsin}{\Pinp_{\obs}}
\newcommand{\Pobsout}{\Poup_{\obs}}
\newcommand{\relin}{\inp_{\rel}}
\newcommand{\relout}{\oup_{\rel}}
\newcommand{\obsin}{\inp_{\obs}}
\newcommand{\obsout}{\oup_{\obs}}
\def\mtC{{\mathbb{C}}}
\newcommand{\Oout}{\rho}
\newcommand{\Oin}{\eta}
\newcommand{\Sel}{\phi}
\newcommand{\gani}[4]{[#1\vdash(#2)#3(#4)]}
\newtheorem{theorem}{Theorem}[section]
\newtheorem{proposition}[theorem]{Proposition}
\newtheorem{example}[theorem]{Example}
\newtheorem{definition}[theorem]{Definition}
\newenvironment{proof}{\noindent {\sc Proof.~}}{\hfill 
$\Box$\newline\smallskip\mbox{}\unskip~~\normalsize} 
\newcommand{\topi}{\typ^{\rel}}
\newcommand{\ido}{\id^{\obs}}
\newcommand{\ev}{\mbox{\tt ev}}
\newcommand{\tM}{\mathbb{M}}
\newcommand{\prog}[1]{\ttP_{#1}}
\newcommand{\ttP}{\mathbb{P}}
\newcommand{\tpost}{\rightarrow_{P}}
\newcommand{\dgreen}{\color[rgb]{0,.55,0}}
\newcommand{\dblue}{\color[rgb]{0,0,.6}}
\newcommand{\dred}{\color[rgb]{.7,0,0}}
\def\NI{\mbox{\tt NI}}
\def\ANI{\mbox{\tt ANI}}
\def\DNI{\mbox{\tt DNI}}
\def\DANI{\mbox{\tt DANI}}
\def\BDNI{\mbox{\tt B-DNI}}
\def\ADNI{\mbox{\tt A-DNI}}
\newcommand{\gbsecr}[4]{\mbox{\small $(#1)$}#2\mbox{\small $(#3\rightsquigarrow\!\!\![\!] #4)$}}
\newcommand{\gasecr}[4]{\mbox{\small $(#1)$}#2\mbox{\small $(#3\Rightarrow #4)$}}
\newcommand{\prabs}[2]{{#1}^{#2}}
\newcommand{\puabs}[2]{{#1}_{#2}}
\newcommand{\pruabs}[3]{{#1}_{#2}^{#3}}
\lstdefinelanguage{pseudocode}
{morekeywords={data,do,else,foreach,function,procedure,if,return,returns,takes,then,until,while,repeat},
sensitive=false,
morecomment=[l]{//},
morestring=[b]'',}
\newcommand{\COMMENT}[1]{}
\def\defemb#1#2{\expandafter\def\csname #1\endcsname{\relax\ifmmode #2 
\else\hbox{$#2$}\fi}}
\def\ok#1{\mbox{\raisebox{0ex}[1ex][1ex]{$#1$}}} 
\newcommand{\UNARYFUNCTION}[2]{#1\ifthenelse{\equal{#2}{}}{}{\left(#2\right)}}
\newcommand{\BINARYFUNCTION}[3]{#1\ifthenelse{\not\equal{#2}{}}{\left(#2\ifthenelse{\equal{#3}{}}{}{,#3}\right)}{}}
\newcommand{\TERNARYFUNCTION}[4]{#1\ifthenelse{\not\equal{#2}{}}{\left(#2\ifthenelse{\equal{#3}{}}{}{,#3}\ifthenelse{\equal{#4}{}}{}{,#4}\right)}{}}
\newcommand{\FOURARYFUNCTION}[5]{#1
\ifthenelse{\equal{#2}{}}{}{\left(#2\ifthenelse{\equal{#3}{}}{}{,#3}\ifthenelse{\equal{#4}{}}{}{,#4}\ifthenelse{\equal{#5}{}}{}{,#5}\right)}}
\newcommand{\UNARYFUNCTIONWITHSUBSCRIPT}[3]{#1\ifthenelse{\equal{#2}{}}{}{_{#2}}\ifthenelse{\equal{#3}{}}{}{\left(#3\right)}}
\newcommand{\BINARYFUNCTIONWITHSUBSCRIPT}[4]{#1\ifthenelse{\equal{#2}{}}{}{_{#2}}\ifthenelse{\equal{#3}{}}{}{\left(#3,#4\right)}}
\newcommand{\BINARYINFIXFUNCTION}[3]{\ifthenelse{\equal{#2}{}}{}{#2} #1 \ifthenelse{\equal{#3}{}}{}{#3}}
\newcommand{\CODE}[1]{\lstinline!#1!}
\newcommand{\PROOFSTEP}[1]{\ifthenelse{\equal{#1}{}}{}{[\mbox{#1}]}}
\def\cM{\mathcal{M}}
\title{Abstract interpretation-based approaches to Security\\{\large A Survey on Abstract Non-Interference and its Challenging Applications}}
\author{Isabella Mastroeni
\institute{Department of Computer Science}
\institute{University of Verona\\ Verona, Italy}
\email{isabella.mastroeni@univr.it}}
\begin{document}
\maketitle

\begin{abstract}
In this paper we provide a survey on the framework of abstract non-interference. In particular, we describe a general formalization of abstract non-interference by means of three dimensions (observation, protection and semantics) that can be instantiated in order to obtain well known or even new weakened non-interference properties. Then, we show that the notions of abstract non-interference introduced in language-based security are instances of this more general framework which allows to better understand the different components of a non-interference policy. Finally, we consider two challenging research fields concerning security where abstract non-interference seems a promising approach providing new perspectives and new solutions to open problems: Code injection and code obfuscation.
\end{abstract}

\section{Introduction}\label{intro}
Understanding information-flow is essential in code debugging, program
analysis, program transformation, and software verification but also in code protection and malware detection. Capturing
information-flow means modeling the properties of control and data
that are transformed dynamically at run-time. Program slicing needs
information-flow analysis for separating independent code; code
debugging and testing need models of information-flow for
understanding error propagation, language-based security needs
information-flow analysis for protecting data confidentiality from
erroneous or malicious attacks while data are processed by programs.
In code protection information-flow methods can be used for deciding where to focus the obfuscating techniques, while in malware detection information flow analyses can be used for understanding how the malware interact with its context or how syntactic metamorphic transformations interfere with the analysis capability of the malware detector.
The key aspect in information-flow analysis is understanding the
degree of independence of program objects, such as variables and
statements. This is precisely captured by the notion of {\em
  non-interference\/} introduced by Goguen and Meseguer
\cite{goguenmes82} in the context of the research on security
polices and models. 

The standard approach to language-based {\em
  non-interference\/} is based on a characterization of the attacker
that does not impose any observational or complexity restriction on
the attackers' power. This means that, the attackers
have {\em full power\/}, namely they are modeled without any
limitation in their quest to obtain confidential information. For this
reason non-interference, as defined in the literature, is an extremely
restrictive policy.  The problem of refining this kind of security
policy has been addressed by many authors as a major challenge in
language-based information-flow security \cite{SM03}.  Refining
security policies means weakening standard non-interference checks, in
such a way that these restrictions can be used in practice or can
reveal more information about how information flows in programs.

In the literature, we can find mainly two different approaches for
weakening non-interference: by constraining the power of the attacker
(from the observational or the computational point of view), or by
allowing some confidential information to flow (the so called {\em
  declassification\/}).  There are several works dealing with both
these approaches, but to the best of our knowledge, the first approach aiming at characterizing
at the same time both the power of the attacker's model
and the private information that can flow is {\em abstract non-interference} \cite{GM04popl} where the attacker is modeled as an abstraction of public data (input and output) and the information that may flow is an abstraction of confidential inputs. In this framework these two aspects are related by an adjunction relation \cite{GMadj10} formally proving that the more concrete the analysis the attacker can perform the less information we can keep protected.

In this paper, we introduce the abstract non-interference framework from a more general point of view. Data are simply partitioned in unobservable (called {\em internal}) and {\em observable} \cite{GMproof10} and  we may observe also relations between internal and observable data, and not simply attribute independent properties \cite{MB10}.
Our aim is that of showing that the abstract non-interference framework may be exported, due to its generality, to different fields of computer science, providing new perspectives for attacking both well known and new security challenges.

\paragraph*{Paper outline.}
The paper is structured as follows. In the following of this section we introduce the basic notions of abstract interpretation, abstract domain completeness and program semantics, used in the rest of the paper. In Sect.~\ref{secANI} we recall and we slightly generalize the notion of abstract non-interference ($\ANI$) formalized in the last years. In particular we describe $\ANI$ by means of three general dimensions: semantic, observation and protection. Finally we combine all these dimensions together. In Sect.~\ref{ANIsec} we provide a survey about how, in the literature, this notion of $\ANI$ has been used for characterizing weakened policies of non-interference in language-based security. Again, we organize the framework by means of the three dimensions that here become: {\em Who} attacks, {\em What} is disclosed and {\em Where/When} the attacker observes. Finally, we conclude the paper in Sect.~\ref{othSec} where we introduce two promising security fields where we believe the $\ANI$-based approach may be fruitful for providing a new perspective and a set of new formal tools for reasoning on challenging security-related open problems.

\paragraph*{Abstract interpretation: Domains and surroundings.}
%
Abstract interpretation is a general theory for specifying and designing approximate semantics of program languages \cite{CC77}. Approximation can be equivalently formulated
either in terms of Galois connections or closure operators
\cite{CC79}.  An {\em upper closure operator\/} $\rho: C\rightarrow C$ on a
poset $C$ ($\uco(C)$ for short), representing concrete objects, is monotone, idempotent, and extensive: $\forall x\in C.\;x\leq_C \rho (x)$.
The upper closure operator is the function that maps the concrete values to their abstract properties, namely with the best possible approximation of the concrete value in the abstract domain. For example, $\Sign:\wp(\nint)\ra\wp(\nint)$, on the powerset of integers, associates each set of integers with its sign: $\Sign(\varnothing)=\varnothing\defi${\em ``none''}, $\Sign(S)=\{n~|~n>0\}\defi+$ if $\forall n\in S.\:n>0$, $\Sign(0)=\{0\}\defi 0$, $\Sign(S)=\{n~|~n<0\}\defi-$ if $\forall n\in S.\:n<0$, $\Sign(S)=\{n~|~n\geq 0\}\defi 0+$ if $\forall n\in S.\:x\geq 0$, $\Sign(S)=\{n~|~n\leq 0\}\defi0-$ if $\forall n\in S.\:n\leq 0$ and $\Sign(S)=\nint\defi\mbox{\em ``I don't know''}$ otherwise. Analogously, the operator $\Par:\wp(\nint)\ra\wp(\nint)$ associates each set of integers with its parity, 
$\Par(\varnothing)=\varnothing\defi\mbox{\em ``none''}$, $\Par(S)=\{n\in\nint~|~n\ \mbox{is even}\}\defi\even$ if $\forall n\in S.\: n$ is even, $\Par(S)=\{n\in\nint~|~n\ \mbox{is odd}\}\defi\odd$ if $\forall n\in S.\:n$ is odd and $\Par(S)=\nint\defi\mbox{\em ``I don't know''}$ otherwise. 
Usually, {\em ``none''} and {\em ``I don't know''} are simply denoted $\varnothing$ and $\nint$.
%
Formally, closure operators $\rho$ are uniquely
determined by the set of their fix-points (or idempotents) $\rho (C)$, for instance $\Par=\{\nint,\even,\odd,\varnothing\}$. For upper
closures, $X\subseteq C$ is the set of fix-points of $\rho\in\uco(C)$
iff $X$ is a {\em Moore-family\/} of $C$, i.e., $X=\cM (X)\defi
\{\wedge S~|~ S\subseteq X\}$~---~where $\wedge \varnothing=\top \in
\cM (X)$. The set of all upper closure operators on $C$, denoted
$\uco(C)$, is isomorphic to
the so called {\em lattice of abstract interpretations of $C$\/}
\cite{CC79}.  If $C$ is a complete lattice then $\uco(C)$ ordered point-wise is also a complete lattice,  $\tuple{\uco(C),\sqsubseteq,\sqcup,\sqcap,\top,\id}$ where
for every $\rho,\eta \in \uco(C)$, $I\subseteq \Nats$, $\{ \rho_i \}_{i\in I} \subseteq
\uco(C)$ and $x\in C$: $\rho \sqsubseteq \eta$ iff $\forall y\in C.\;
\rho (y) \leq \eta(y)$ iff $\eta (C) \subseteq \rho (C)$;
$(\sqcap_{i\in I} \rho_i)(x) =\wedge_{i\in I} \rho_i (x)$; and
$(\sqcup_{i\in I} \rho_i)(x) =x \:\Lra\: \forall i\in I.\; \rho_i (x)
= x$.  

Abstract interpretation is a theory for approximating program behaviour by approximating their semantics. Now, we formally introduce the notion of precision in terms of abstract domain completeness. There are two kinds of completeness, called {\em backward} and {\em forward} completeness \cite{GQ01}.  Backward completeness ($\cB$) requires accuracy when we compare the computations on the program input domain: the abstract outputs of the concrete computation $f$ are the same abstract outputs obtained by computing the program on the abstract values.
Formally, $\rho$ is backward complete for $f$ iff $\rho\comp f\comp\rho=\rho\comp f$ \cite{CC79}. 
Consider Fig.~\ref{bcomplet}. The outer oval always represents the concrete domain, while the inner one represents the abstract domain characterised by the closure $\rho$. The computation is represented by the function $f$. Hence, on the left, we have incompleteness since the 
abstract computation on the abstract values ($\rho(f(\rho(x)))$) loses precision with respect to  (i.e., is more abstract than) the abstraction of the computation on the concrete values ($\rho(f(x))$). On the right, we have  completeness because the two abstract computations coincide.
\begin{figure}
\begin{center}
\includegraphics{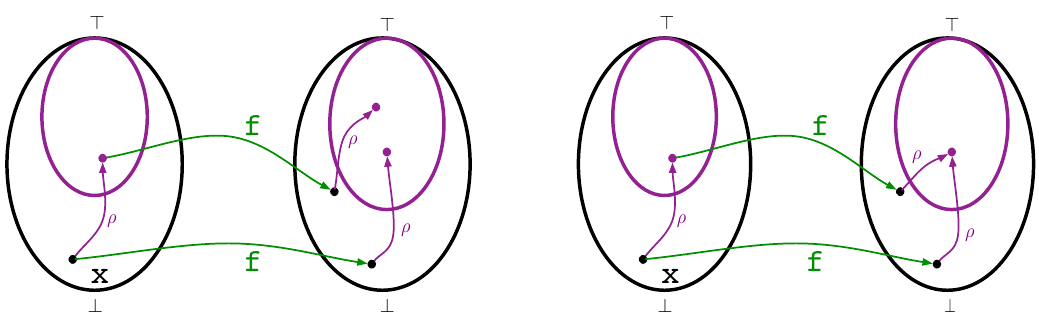}
\end{center}
\caption{Backward completeness}\label{bcomplet}
\end{figure}
Forward completeness ($\cF$) requires accuracy  when we compare the abstract and the concrete computations on the  output domain of the program, i.e., we compare whether the abstract and the concrete outputs are the same when the program computes on the abstract values.  Formally, given a semantics $f$ and a closure $\rho$, $\rho$ is forward complete for $f$ iff $\rho\comp f\comp \rho=f\comp\rho$.
Consider Fig.~\ref{fcomplet} . On the left, we have incompleteness since the concrete and the abstract computations on abstract values (respectively $f(\rho(x))$ and $\rho(f(\rho(x)))$) does not provide the same result, and in particular the abstraction of the computation loses precision. On the right, the two computations coincide since $f$ returns, as output, an element in $\rho$, and therefore we have completeness.
\begin{figure}
\begin{center}
\includegraphics{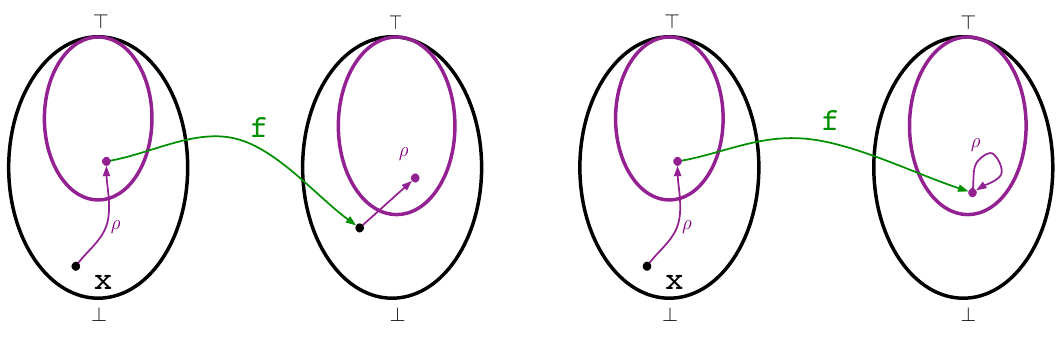}
\end{center}
\caption{Forward completeness}\label{fcomplet}
\end{figure}
Finally we observe that, if $f$ is additive\footnote{It commutes with least upper bounds.}, then there exists $f^+\defi\lambda x.\bigvee\{~y|~f(y)\leq x\}$ and we have that $\rho$ is $\cB$-complete for $f$ iff it is $\cF$-complete for $f^+$.

\paragraph*{Programs and Semantics.} Consider 
a simple (deterministic) imperative language $\cL$: $
\tC  ::=  \nil ~|~ x:=e~|~ \tC_0;\tC_1~|~ \while\ B\ \dow\ \tC\ \ew~|~
\ifc\ B\ \thenc\ \tC_0\ \elsec\ \tC_1
$.
Let $\prog{\cL}$ be a set of programs in the language $\cL$,
$\Var(P)$ the set of all the variables in $P\in\prog{\cL}$, 
$\val$ be the set of values, $\tM:\Var(P)\ra \val$. 
In sake of simplicity, in the examples, we consider a fine-grained big-step operational semantics, where each step of computation corresponds to the execution of a single statement, e.g. the execution of an $\ifc$ statement corresponds to one step. In this way, the program points coincide with the steps of computation. This is not a mandatory choice and the whole framework can be extended to trace semantics of any granularity.
Let $\tpost$ be the transition relation induced by the fine-grained big-step semantics, then we denote by $\grasstr{P}$ the set of execution traces of $P$ obtained by fix-point iteration of the transition relation \cite{C00tcs}. We denote by $\grass{P}:\tM\ra\tM$ the denotational semantics of $P$ associating with initial states the corresponding final states. In other words $\grass{P}$ is the I/O abstraction of the trace semantics $\grasstr{P}$ \cite{C00tcs}.

\section{A general framework for Abstract Non-Interference}\label{secANI}
In this section, we introduce the notion of abstract non-interference
\cite{GM04popl}, i.e., a weakening of non-interference given
by means of abstract interpretations of
concrete semantics.  We will start from standard notion on
non-interference ($\NI$ for short), originally introduced in
language-based security \cite{cohen77,goguenmes82,SM03}, and here
generalized to any kind of classification of data (intended as program variables), where we are
interested in understanding whether a given class of data ({\em internal}) interferes with
another class of data ({\em observable}). In other words, we generalize the
public/private data classification in language-based security
to a generic observable/internal classification \cite{GMproof10}.
\\
Consider variables {\em statically}\footnote{In this paper we do not consider security types that can dynamically change.} distinguished into  {\em
  internal} (denoted $\rel$) and {\em observable} (denoted $\obs$).  The
internal data\footnote{In sake of simplicity, we identify data with their containers (variables).} correspond to those variables that must not interfere
with the observable ones.  This partition characterises the $\NI$ policy we have to
verify/define.
\\
Both the input and the output variables are partitioned in this way,
and the two partitions need not coincide. Hence, if $\Pinp$ denotes the set of input variables and $\Poup$ denotes the set of output variables, we
have four classes of data: $\Prelin$ are the input {\em internal} variables, $\Pobsin$ are the not internal (potentially observable) input, $\Prelout$
are the not observable (hence internal) outputs, and
$\Pobsout$ are the output {\em observables}. Note that the formal distinction between $\Pinp$ and $\Poup$ is used only to underline that there can be different partitions of input and output, namely $\Prelin=\Prelout$ need not hold. In general, we have the same set of variables in input and in output, hence $\Pinp=\Poup=\Var(P)$.\\
Informally, non-interference
can be reformulated by saying that if we fix the values of variables in $\Pobsin$ and we let
values of variables in $\Prelin$ change, we must not observe any difference in the values of variables in $\Pobsout$. Indeed if this happens it means that $\Prelin$ interferes
with $\Pobsout$.  We will use the following notation: if $n=|\{x\in\Var(P)~|~x\ \mbox{is internal}\}|=|\Prelin|$, then $\relin\defi\val^{n}$, analogously 
$\relout\defi\val^{|\Prelout|}$, $\obsin\defi\val^{|\Pobsin|}$ and $\obsout\defi\val^{|\Pobsout|}$, 
where 
$|X|$ denotes the cardinality of the set of variables $X$. Consider $\mtC\in\{\relin,\obsin,\relout,\obsout\}$,
in the following, we abuse notation by denoting $v\in\mtC$ the
fact that $v$ is a possible tuple of values for the vector of variables evaluated in $\mtC$, e.g., $v\in\relin$ is a vector of values for the variables in $\Prelin$. Moreover, if $x$ is a tuple of variables in $\Poup$ (analogous for $\Pinp$) we denote as $x^{\rel}$ [resp.\ $x^{\obs}$] the projection of the tuple of variables $x$ only on
the variables in $\Prelout$ [resp.\ $\Pobsout$] (analogous for values). At this point, we can reformulate standard
non-interference for a deterministic program $P$\footnote{If $P$ is not deterministic the definition works anyway simply by interpreting $\grass{P}(s)$ as the set of all the possible outputs starting from $s$.}, w.r.t.\ fixed partitions of input and output variables
$\pi_{\tI}\defi\{\Pobsin,\Prelin\}$ and $\pi_{\tO}\defi\{\Pobsout,\Prelout\}$ ($\pi=\{\pi_\tI,\pi_\tO\}$):
\vspace{-.5cm}
\begin{center}\label{NIdef}
\begin{eqnarray}\label{defNI}
\mbox{
\framebox{
\begin{tabular}{c}
A program $P$, satisfies {\em non-interference} w.r.t. $\pi$ if \\
$\forall v\in\obsin,\forall v_1,v_2\in\relin\:.\:
(\grass{P}(v_1,v))^{\obs}=(\grass{P}(v_2,v))^{\obs}$
\end{tabular}}}
\end{eqnarray}
\end{center}

%
\subsection{An abstract domain completeness problem}
In this section, we recall from \cite{GMadj10,MB10} the completeness formalization of (abstract) non-interference. This characterization underlines that non-interference holds when the abstraction of input and output data (the projection on observable values in standard $\NI$) is complete, i.e., precise, w.r.t.\ the semantics of the program. This exactly means that, starting from a fixed input property the semantics of the program does not change the output observable property \cite{HM05}.\\
Joshi and Leino's characterization of classic $\NI$ in  \cite{joshiLeino00} provides an equational definition of $\NI$ which can be easily rewritten as a completeness equation: a program $P$ containing internal
and observable variables (ranged over by $p$ and $o$ respectively) satisfies non-interference
iff $HH;P;HH = P;HH$,
where $HH$ is an assignment of an arbitrary
value to $p$. ``The postfix occurrences of $HH$ on each side mean that
we are only interested in the final value of $o$ and the prefix $HH$
on the left-hand-side means that the two programs are equal if the
final value of $o$ does not depend on the initial value of
$p$''~\cite{SS-HOSC01}.

An abstract interpretation is (backwards) complete for a function, $f$,
if the result obtained when $f$ is applied to any concrete input, $x$,
and the result obtained when $f$ is applied to an abstraction of the
concrete input, $x$, both abstract to the same value. The completeness connection is implicit in Joshi and
Leino's definition of secure information flow and the implicit
abstraction in their definition is: ``each internal value is associated
with $\top$, that is, the set of all possible internal values''. 

Let  $\pi\defi\pi_\tI=\pi_\tO$, namely $\relin=\relout, \obsin=\obsout$. The set of program states is $\Sigma =
\relin\times\obsin$, which is implicitly indexed by the internal
variables followed by the observable variables. 
%

Because $HH$ is an arbitrary assignment to $p\in\relin$, its 
semantics can be modelled as an \emph{abstraction function}, $\cH$, on sets of concrete
program states, $\Sigma$; that is, $\cH:\wp(\Sigma)\to\wp(\Sigma)$,
where $\wp(\Sigma)$ is ordered by subset inclusion, $\subseteq$.
For each possible value of an observable variable, $\cH$ associates
\emph{all} possible values of the internal variables in $P$.  Thus $\cH(X)
= \relin \times X^\obs$, where $\relin$ is the top element of
$\wp(\relin)$.  Hence the Joshi-Leino definition can be
rewritten~\cite{GMadj10} in the following way:
\begin{eqnarray}
\label{eqn:bc}
\cH\circ\grass{P}\circ\cH = \cH\circ \grass{P}
\end{eqnarray}
It is clear that $\cH$ is parametric on the partition $\pi$ (and in general the abstraction $\cH$ applied to the input is parametric on $\pi_\tI$, while the one applied on the output is parametric on $\pi_\tO$). Anyway, in sake of readability we use simply $\cH$ instead of $\cH_\pi$.\\
The equation above is precisely the definition of backwards completeness in abstract interpretation~\cite{CC79,GRS00}
(see \cite{BM08} for examples).
Note that, Equation~(\ref{eqn:bc}) gives us a way to
\emph{dynamically} check whether a program satisfies a confidentiality policy, this is due to the use of denotational semantics. In \cite{BM08} we show that we can perform the same analysis statically, by involving the weakest precondition semantics.

In particular, static checking involves
$\cF$-completeness, instead of $\cB$-completeness, and the use of
weakest preconditions instead of the denotational semantics \cite{BM08}.
With weakest preconditions, (written $\Wlp_P$), equation
~(\ref{eqn:bc}) has the following equivalent reformulation:
\begin{eqnarray}
\label{eqn:fc}
\cH\circ \Wlp_P\circ \cH = \Wlp_P\circ\cH
\end{eqnarray}
Equation~(\ref{eqn:fc}) says that $\cH$ is $\cF$-complete for
$\Wlp_P$. The equation asserts
that $\Wlp_P(\cH(X))$ is a fixpoint of $\cH$, meaning that  $\Wlp_P\comp\cH$
associates each observable output with any possible internal input: a further abstraction 
of the fixpoint (cf., the lhs of equation~(\ref{eqn:fc})) yields nothing 
new. Because no \emph{distinctions among internal inputs} get exposed to 
an observer, the observable output is independent of the internal input, hence also equation~(\ref{eqn:fc}) asserts classic $\NI$.

%
\subsection{Tuning Non-Interference: Three dimensions of Non-Interference} \label{ani}
We believe that the real added value of abstract non-interference is the possibility of deeply understanding which are the {\em actors} playing and which is their role in the definition of a security policy to enforce \cite{BM08}. In particular, we can observe that in the abstract non-interference framework we can identify three {\em dimensions}: (a) the semantic dimension, (b) the
observation dimension and (c) the protection/declassification
dimension. These dimensions are pictorially represented in Fig.~\ref{geo} \cite{MB10}.
\begin{figure}[htbp]
\begin{center}
\includegraphics[scale=.65]{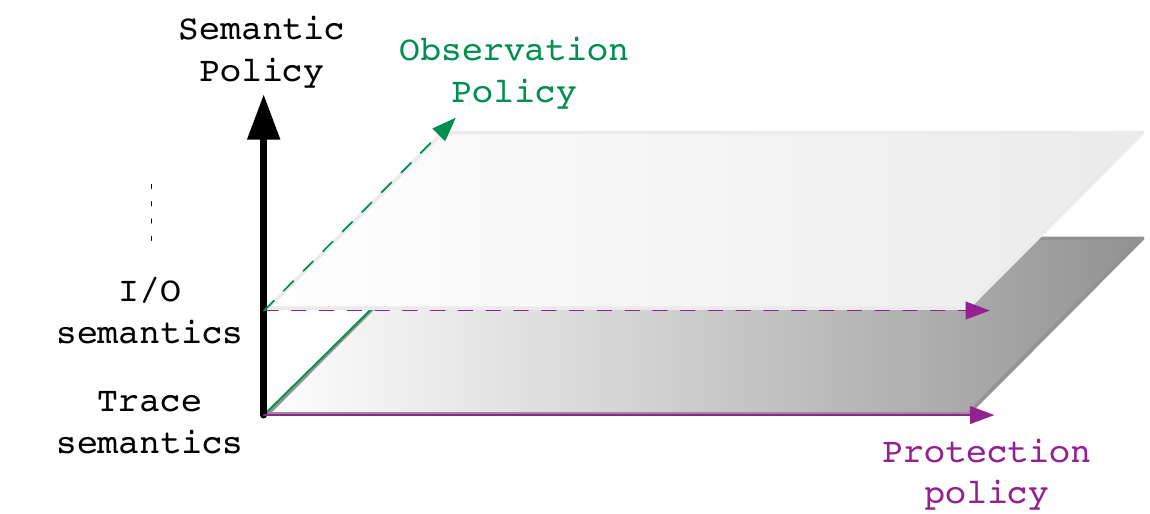}
\end{center}
\caption{The three dimensions of non-interference}\label{geo}
\end{figure}
In general, to describe any non-interference property, we first fix its semantic dimension. The semantic dimension comprises of the \emph{concrete semantics} of the program (the solid arrow in Fig.~\ref{geo} shows Cousot's hierarchy of semantics \cite{C00tcs}), the \emph{set of observation points}, that is the program points where the observer can analyse data, and the \emph{set of protection points}, that is the program points where information must be protected. Next we fix the program's observation and protection dimensions that say \emph{what} information can be observed and \emph{what} information needs to be protected at these program points.   This is how the \emph{what} dimension of declassification policies \cite{SSJCS07} is interpreted in \cite{BM08}.
\\
Consider, for instance standard $\NI$ in Eq.~\ref{defNI}.
The three dimensions of $\NI$ are as follows. The semantic dimension, i.e., the concrete  semantics, is $P$'s denotational semantics. Both inputs and outputs constitute the observation points while inputs constitute the protection points because it is internal data at program inputs that need to be protected.
 The observation dimension is
the identity on the observable input-output because the observer can only analyse
observable inputs and outputs. Such an observer is the most
powerful one for the chosen concrete semantics because its
knowledge (that is, the observable projections of the initial and final
states of the two runs of the program) is completely given by the
concrete semantics.  Finally, the protection dimension is the
identity on the \emph{internal input}. This means that \emph{all} internal
inputs must be protected, or dually, \emph{no} internal inputs must be
released.\\
%
In this survey we generalize these dimensions in the notion of abstract non-interference introduced in \cite{GMproof10}, which goes beyond the standard low and high data classification in language-based security.

\subsubsection{The semantic dimension}\label{sec:sempol}

In this section our goal is to provide a general description of a semantic dimension that is parametric on any set of protection and observation points. For this purpose it is natural to move to a trace semantics. The first step in this direction is to consider a function $\post$ as defined below.
Let us define
$
\post_{P}\defi
\{\tuple{s,t}~|~s,t\in\Sigma,\ s\tpost t\},
$
where, we recall that $\tpost$ is the semantic transition relation.
From this definition of $\post_P$ it is quite straightforward to recover non-interference based on I/O observation \cite{MB10}. Let $\post^{+}_{P}$ the transitive closure of $\post_{P}$ associating with each initial state $s$ the final state $t$ reachable from $s$, i.e., $\post^+_P\defi\{\tuple{s,t}~|~s,t\in\Sigma,\ s\tpost^* t,\ t\ \mbox{final}\}$.
Then, an equivalent characterization of $\NI$, where $\Sigma_{\vdash}$ is the set of initial states, is
\[
\forall s_{1},s_{2}\in\Sigma_{\vdash}.\:s_{1}^\obs=s_{2}^\obs\ \Ra\ \post^{+}_{P}(s_{1})^\obs=\post^{+}_{P}(s_{2})^\obs
\]

\paragraph{\em $\NI$ for trace semantics.}
A denotational semantics does not take into account the whole history
of computation, and thus restricts the kind of
protection/declassification policies one can model.  In order to
handle more precise policies, that take into account {\em where/when}
\cite{SSJCS07} information is released in addition to {\em what}
information is released we must consider a more concrete semantics,
such as trace semantics. More precisely, depending on how we fix the observation points we describe a {\em where} or a {\em when} dimension: If the observation points are program points then we are fixing {\em where} to observe, if the they are computational steps then we are fixing {\em when} to observe. In our semantics these points coincide and we choose to call this dimension {\em where}.\\
Let us define $\NI$ on traces:
\[
\forall  s_{1},s_{2}\in\Sigma_{\vdash}.\:s_{1}^\obs=s_{2}^\obs\ \Ra\ \grasstr{P}(s_{1})^{\obs}=\grasstr{P}(s_{2})^{\obs}
\]
This definition says that given two observably indistinguishable input
states, $s_1$ and $s_2$, the two executions of $P$ must
generate two --- both finite or both infinite --- sequences of states
in which the corresponding states in each sequence are observably
indistinguishable. 
Equivalently, we can use a set of $\post$ relations: for $i\in \Nats$ we define the family of relations
$\post^{i}_{P}\defi\{\tuple{s,t}~|~t\in\Sigma,\ s\in\Sigma_{\vdash},\ s\tpost^{i}t\}$, 
i.e., $\post^{i}_{P}$ is the I/O semantics after $i$ steps of computations.
The following result is straightforward.
\begin{proposition}\cite{MB10}
$\NI$ on traces w.r.t. $\pi_\tI$ and $\pi_\tO$ holds iff for each program point, $i$, of program $P$, we have 
$
\forall s_{1},s_{2}\in\Sigma_{\vdash}.s_{1}^\obs=s_{2}^\obs\ \Ra\ \post^{i}_{P}(s_{1})^{\obs}=\post^{i}_{P}(s_{2})^{\obs}
$.
\end{proposition}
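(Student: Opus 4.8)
The plan is to unfold both formulations down to the level of the individual states occurring in each execution, exploit determinism to identify the $i$-th state of the trace from $s$ with $\post^i_P(s)$, and then reduce observable equality of two traces to observable equality step by step. Since $\cL$ is deterministic, $\grasstr{P}(s)$ is the unique (finite or infinite) sequence $s = s^{(0)} \tpost s^{(1)} \tpost s^{(2)} \tpost \cdots$, and by the very definition of $\post^i_P$ we have $s^{(i)} = \post^i_P(s)$ exactly when the computation has not yet halted by step $i$, with $\post^i_P(s)$ having empty image otherwise. Because observable projection is taken pointwise, $\grasstr{P}(s)^\obs$ is the sequence $\big((s^{(i)})^\obs\big)_i$, so comparing two traces observably is the same as comparing these step-indexed families.

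For the forward direction I would assume $\NI$ on traces and fix an arbitrary step index $i \in \Nats$ together with $s_1, s_2 \in \Sigma_\vdash$ satisfying $s_1^\obs = s_2^\obs$. The hypothesis gives $\grasstr{P}(s_1)^\obs = \grasstr{P}(s_2)^\obs$, i.e.\ the two observable sequences coincide position by position and have equal length. Reading off position $i$ yields $\post^i_P(s_1)^\obs = \post^i_P(s_2)^\obs$, which is the per-point condition; the length agreement ensures this also holds past the halting step of a terminating run, where both images are empty.

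For the converse I would assume the per-point condition and, given $s_1^\obs = s_2^\obs$, compare the traces directly. For every $i$ we have $\post^i_P(s_1)^\obs = \post^i_P(s_2)^\obs$, and since this equality already covers the case where one image is empty, it forces the two executions to halt at the same step (hence to be both finite of equal length, or both infinite) and to agree observably at each step. Collecting these equalities over all $i$ gives $\grasstr{P}(s_1)^\obs = \grasstr{P}(s_2)^\obs$, which is $\NI$ on traces.

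The only delicate point, and the step I expect to require the most care, is the bookkeeping of trace lengths in the terminating case: one must verify that the quantified equality ``$\post^i_P(s_1)^\obs = \post^i_P(s_2)^\obs$ for all $i$'' genuinely encodes the \emph{both finite or both infinite, of equal length} clause of the trace definition, rather than merely matching states along a common prefix. This rests on the convention that $\post^i_P(s)$ has empty image once the run from $s$ has halted before step $i$, so that equality of images at every $i$ simultaneously pins down the common halting step and the pointwise observable equality; with that convention in place the remaining reasoning is a routine unfolding.
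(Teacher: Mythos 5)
Your proposal is correct, and it fills in exactly the argument the paper leaves implicit: the paper states this proposition as ``straightforward'' and cites \cite{MB10} without giving a proof of its own. Your unfolding — using determinism to identify the $i$-th state of the trace from $s$ with $\post^{i}_{P}(s)$, reading off observable equality position by position, and using emptiness of $\post^{i}_{P}(s)$ past the halting step to recover the ``both finite or both infinite, of equal length'' clause — is precisely the routine verification being alluded to, and you correctly isolate and resolve the one genuinely delicate point (that the quantified per-step equality encodes agreement of trace lengths, not merely agreement on a common prefix).
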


The $\post$ characterization of trace-based $\NI$
precisely identifies the observation points as the outputs of the
$\post$ relations, that is, any possible intermediate state of
computation, and the protection points are identified as the inputs of the $\post$
relations, that is, the initial states.

\paragraph{\em General semantic policies.}
In the previous paragraph, we show how for denotational and trace
semantics, we can define a corresponding set of $\post$ relations
fixing protection and observation points. In order to understand how
we can generalize this definition, let us consider a graphical
representation of the situations considered in Fig.~\ref{fig:scheme1} \cite{BM08}.
\begin{figure}[htbp]
\begin{center}
\includegraphics[scale=.6]{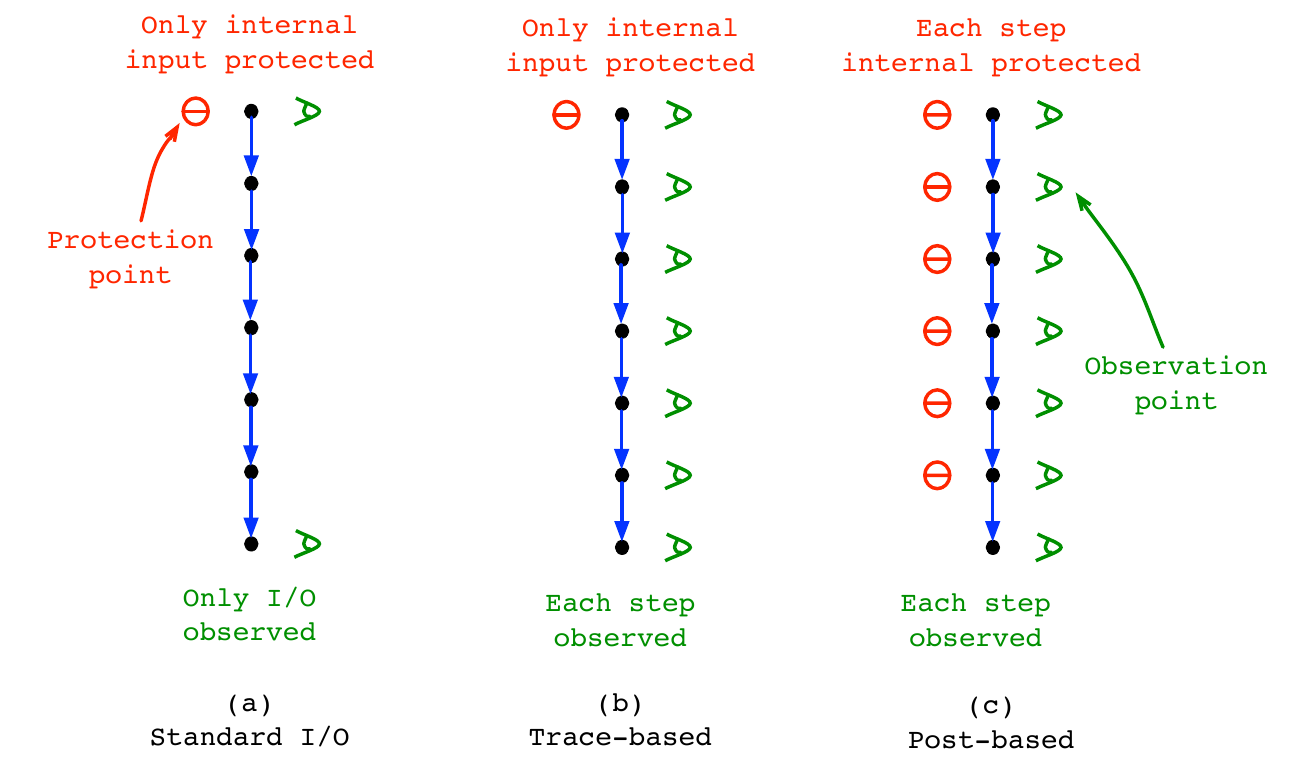}
\end{center}
\caption{Different notions of non-interference for trace semantics}
\label{fig:scheme1}
\end{figure}
\begin{description}
\item[(i)] In the first picture, the semantic dimension says that an
  observer can analyse the observable inputs and outputs only, while we
  can protect only the internal inputs. This notion corresponds to $\NI$.
\item[(ii)] In the second picture, the semantic dimension says that an
  observer can analyse each intermediate state of computation,
  including the input and the output, while the protection point is
  again the input.  
\item[(iii)] In the last picture, the semantic dimension says that an observer can analyse each
  intermediate state of computation, while the protection points are
  all intermediate states of the computation. In order to check this
  notion of non-interference for a program we have to check
  non-interference separately for each statement of the program
  itself. It is worth noting that this corresponds exactly to $\forall
  s_{1},s_{2}\in\Sigma.\:s_{1}^{\obs}=s_{2}^\obs.\:\post_{P}(s_{1})^\obs=\post_{P}(s_{2})^\obs$.
\end{description}
It is clear that between (i) and (ii) there are several notions of
non-interference depending on the intermediate observable states fixed by the where dimension of the policy
({\em observation points} in Fig.~\ref{fig:scheme1}).
For example, in language-based security, gradual release \cite{AS07} considers as observation points only those program points corresponding to observable events (i.e., assignment to observable variables, declassification points and termination points). 
However, unless we consider interactive systems that
can provide inputs in arbitrary states, we will
protect only the initial ones. Hence, in this case, (ii) and (iii) collapse to the same notion.
\begin{definition}[Trace-based $\NI$]\label{def:sempol}~\cite{MB10}
Given a set $\tO$ of observation points, the notion of $\NI$ w.r.t. $\pi_\tI$ and $\pi_\tO$, based on the
semantic dimension w.r.t.\ $\tO$, and the input as
protection point, is defined:
\[
\begin{array}{c}
\forall j\in\tO.\:\forall s_{1},s_{2}\in \Sigma_{\vdash}.\:
 s_{1}^\obs=s_{2}^\obs\ \Ra\ \post^{j}_{P}(s_{1})^\obs=\post^{j}_{P}(s_{2})^\obs
\end{array}
\]
\end{definition}
Note that, this is a general notion, that can be formulated depending on $\tO$. In particular, we obtain standard $\NI$ by fixing $\tO=\{p~|~p\ \mbox{is the final program point}\}$, we obtain the most concrete trace-based $\NI$ policy by fixing $\tO=\{p~|~p\ \mbox{is any program point}\}$. But we can also obtain intermediate notions depending on $\tO$: we obtain gradual release by fixing $\tO=\{p~|~p\ \mbox{is a program point corresponding to an observable event}\}$. 

\begin{example}\label{runningEx}
Let us consider the program fragment $P$ with the semantic policies represented in the following picture. The semantic dimension represented in picture $(a)$ is the I/O one. In this case, for each pair of initial states $s_{1},s_{2}$ such that $s_{1}^{\obs}=s_{2}^{\obs}$ we have to check $\post^{6}(s_{1})^{\obs}=\post^{6}(s_{2})^{\obs}$, and it is clear that this hold since, for instance, both $l_1$ and $l_2$ become constant.
\[
\begin{array}{lll}
 {\dred p_{0}}\ra&&\la{\dgreen{o_{0}}}\\
& h_{1}:=h_{2};\\
&h_{2}:=h_{2}\modu 2;\\
&l_{1}:=h_{2}\\
&l_{2}:=h_{1}\\
&l_{2}:=5\\
&l_{1}:=l_2+3\\
&&\la{\dgreen{o_{6}}}\\ 
&(a)
\end{array}
\hspace{1.5cm}
\begin{array}{lll}
 {\dred p_{0}}\ra&&\la{\dgreen{o_{0}}}\\
& h_{1}:=h_{2};\\
&h_{2}:=h_{2}\modu 2;\\
&l_{1}:=h_{2}\\
&&\la{\dgreen{o_{3}}}\\
&l_{2}:=h_{1}\\
&&\la{\dgreen{o_{4}}}\\
&l_{2}:=5\\
&l_{1}:=l_2+3\\
&&\la{\dgreen{o_{6}}}\\ 
&(b)
\end{array}
\] 
On the other hand, the policy in picture $(b)$ considers $\tO=\{3,4,6\}$ (the observable events). In this case, for each pair of initial states $s_{1},s_{2}$ such that $s_{1}^{\obs}=s_{2}^{\obs}$ we have to check $\post^{6}(s_{1})^{\obs}=\post^{6}(s_{2})^{\obs}$, but also $\post^{4}(s_{1})^{\obs}=\post^{4}(s_{2})^{\obs}$
and $\post^{3}(s_{1})^{\obs}=\post^{3}(s_{2})^{\obs}$, and both these new tests fail since in all the corresponding program points there is a leakage of private information. For instance, if $s_{1}=\tuple{h_{1},h_{2}=2,l_{1},l_{2}}$ and $s_{2}=\tuple{h_{1},h_{2}=3,l_{1},l_{2}}$ then we have different $l_{1}$ values in $o_3$: respectively $l_{1}=0$ and $l_{1}=1$.
\end{example}

\paragraph{\em Semantic dimension in the completeness formalization.} We can observe that the completeness equation is parametric on the semantic function used. Since the denotational semantics is the $\post$ of
a transition system where all traces are two-states long. A generalization of the completeness reformulation, in order to cope also with trace-based $\NI$ (Def.~\ref{def:sempol}),
can be immediately obtained \cite{MB10}.  Theorem~\ref{mcforAni} below 
shows the connection, also for traces, between completeness and non-interference.
\begin{theorem}\label{mcforAni}\cite{MB10}
  Let $\tuple{\Sigma,\tpost}$ be the transition system for a program $P$, and $\tO$ a set of observation points. Then, trace-based $\NI$ w.r.t.\  $\pi_\tI$ and $\pi_\tO$ (Def.~\ref{def:sempol})
holds iff
  $\forall j\in\tO.\:\cH\comp\post^{j}_P\comp\cH=\cH\comp\post^{j}_P$.
\end{theorem}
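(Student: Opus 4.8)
The plan is to exploit the fact that both Definition~\ref{def:sempol} and the claimed equation are conjunctions over $j\in\tO$ of conditions each mentioning only the single relation $\post^{j}_{P}$, so that the whole statement splits into independent per-$j$ equivalences. Thus it suffices to prove, for each fixed $j$, that the $j$-th clause $\forall s_1,s_2\in\Sigma_{\vdash}.\,s_1^\obs=s_2^\obs\Ra\post^{j}_{P}(s_1)^\obs=\post^{j}_{P}(s_2)^\obs$ is equivalent to the backward-completeness equation $\cH\comp\post^{j}_{P}\comp\cH=\cH\comp\post^{j}_{P}$. Here I read $\post^{j}_{P}$ as its additive lift $\wp(\Sigma)\to\wp(\Sigma)$, $\post^{j}_{P}(X)=\bigcup_{s\in X}\post^{j}_{P}(s)$, exactly as $\grass{P}$ was lifted in equation~(\ref{eqn:bc}). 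Indeed the denotational case of equation~(\ref{eqn:bc}) is the instance where $\post^{j}_{P}$ collapses to the final-state relation, so the argument below is a uniform re-run of that case, one value of $j$ at a time, and conjoining the outcomes over $\tO$ yields the theorem.

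First I would isolate the only three properties of $\cH$ that the proof uses, all immediate from $\cH(X)=\relin\times X^\obs$: that $\cH$ is an uco; that $\cH(X)$ is completely determined by the observable projection $X^\obs$; and consequently that $\cH(A)\subseteq\cH(B)\Lra A^\obs\subseteq B^\obs$. Fixing $j$ and writing $f=\post^{j}_{P}$, monotonicity of $f$ together with extensivity of $\cH$ gives $f(X)\subseteq f(\cH(X))$ and hence $\cH\comp f\sqsubseteq\cH\comp f\comp\cH$ for free; so completeness is equivalent to the reverse inclusion $\cH(f(\cH(X)))\subseteq\cH(f(X))$ for every $X$. Since both sides are $\cH$-closed, the third property turns this into $(f(\cH(X)))^\obs\subseteq(f(X))^\obs$, and the opposite projection-inclusion again holds for free from $X\subseteq\cH(X)$. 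Thus the completeness equation is equivalent to the purely observable identity $\forall X.\,(f(\cH(X)))^\obs=(f(X))^\obs$.

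It then remains to match this projection identity with the $j$-th clause of Definition~\ref{def:sempol}. For the forward direction I would use that the clause says precisely that $f(s)^\obs$ depends only on $s^\obs$ for initial $s$; since $\cH(X)$ consists of the states whose observable part lies in $X^\obs$, every initial such state contributes to $(f(\cH(X)))^\obs$ only observable outputs already supplied by some observably-equal state of $X$, giving $(f(\cH(X)))^\obs\subseteq(f(X))^\obs$. For the converse I would instantiate the projection identity at a singleton $X=\{s_1\}$: then $\cH(\{s_1\})$ is the set of all states sharing $s_1$'s observable part, so any $s_2\in\Sigma_{\vdash}$ with $s_2^\obs=s_1^\obs$ lies in it and yields $\post^{j}_{P}(s_2)^\obs\subseteq\post^{j}_{P}(s_1)^\obs$; symmetry in $s_1,s_2$ gives equality, which is exactly the clause. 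Conjoining these per-$j$ equivalences over $j\in\tO$ delivers the statement.

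The step I expect to be delicate is the interaction between $\cH$ and the initial-state set $\Sigma_{\vdash}$, since $\post^{j}_{P}$ fires only on $\Sigma_{\vdash}$. The computation of $(f(\cH(X)))^\obs$ tacitly uses that $\cH(X)\cap\Sigma_{\vdash}$ equals $\{s\in\Sigma_{\vdash}\mid s^\obs\in X^\obs\}$, i.e. that the initial states are closed under changing the internal part. This is legitimate here because the protection point is the input and $\cH$ is the ``fill in all internal inputs'' closure (in the standard reading every memory is an admissible input, so $\Sigma_{\vdash}=\Sigma$), but it is the hypothesis to state explicitly, as without it $\cH$ could manufacture initial states absent from $X$ and break the equivalence. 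A secondary point to verify is termination: if $\post^{j}_{P}(s)$ is empty for one initial state and nonempty for an observably-equal one, then both the clause and the completeness equation fail simultaneously, so the empty-output case is absorbed by the projection identity and needs no separate treatment.
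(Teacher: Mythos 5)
Your proposal is correct and takes essentially the approach the paper intends: the paper itself gives no proof of Theorem~\ref{mcforAni} (it is quoted from \cite{MB10} and introduced with the remark that, since the denotational semantics is the $\post$ of a two-state transition system, the generalization of the completeness equation~(\ref{eqn:bc}) ``can be immediately obtained''), and your per-observation-point re-run of the denotational argument is exactly that generalization carried out in detail. Your explicit flagging of the interaction between $\cH$ and $\Sigma_{\vdash}$ is the right caveat, and it is resolved as you say: in the paper's setting states are just value tuples in $\relin\times\obsin$, every memory is an admissible input, so $\cH$ cannot manufacture inadmissible initial states.
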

 
This theorem characterizes $\NI$ as a family of completeness problems also when a malicious attacker can potentially observe the whole trace semantics, namely when we deal with trace-based $\NI$.

Moreover, let us denote as
$\widetilde{\pre}^{j}$ the adjoint map of $\post^{j}$\footnote{By adjoint relation the function $\widetilde{\pre}^{j}$ is the weakest precondition of the corresponding function $\post^{j}$} in the same
transition system, then the completeness equation can be rewritten as
$\cH\comp\widetilde{\pre}^{j}\comp\cH=\widetilde{\pre}^{j}\comp\cH$. 
%

\subsubsection{The observation dimension} \label{sec:obspol}

Consider the program $P\defi x:= |x|*\Sign(y)$, where $\Prelin=\{y\}$ and $\Pobsin=\Pobsout=\{x\}$,
suppose that $|\cdot|$ is the absolute value function, then ``{\em only a portion of $x$
  is affected, in this case $x$'s sign. Imagine if an observer could
  only observe $x$'s absolute value and not $x$'s sign\/}''
\cite{cohen77} then we could say that in the program there is
non-interference between $\rel$ and $\obs$. Abstract interpretation
provides the most appropriate framework to further develop Cohen's
intuition. The basic idea is that an observer can analyze only some
properties, modeled as abstract interpretations of the concrete
program semantics.


Suppose the observation points fixed by the chosen semantics are input
and output.  Then the observation dimension might require that the
observer analyse a particular \emph{property}, $\rho$, of the observable
output --- e.g., parity --- and a particular property, $\eta$, of the
observable input --- e.g., signs.  
In the following, we will consider
$\eta\in\uco(\inp)$ such that $\eta$ abstracts
internal and observable variables. This abstraction may be attribute independent
or relational\footnote{Here, by
  relational, we mean not attribute independent, namely a property
  describing relations of elements, for example $\eta(\tuple{x,y})=0+$
  if $x+y\geq 0$ is a relational property.}. Attribute independent means that $\eta$ can be split in two independent abstractions identifying precisely what is observable ($\obsin$) or not ($\relin$). Hence, in this case it can be split in one abstraction
for the variables in $\Prelin$, denoted $\eta_{\rel}$, and one for the
variables in $\Pobsin$, denoted $\eta_{\obs}$, and we write
$\eta=\tuple{\eta_{\rel},\eta_{\obs}}$. 
Then we obtain a weakening of standard $\NI$ as
follows:
\begin{equation}
\label{eqn:nni}
\forall x_1,x_2\in\val\:.\:
x_1^{\obs}=x_2^{\obs}\ \Ra\ 
{\dred \rho}(\grass{P}({\dred \eta}(x_1)))= {\dred \rho}(\grass{P}({\dred \eta}(x_2)))
\end{equation}
This weakening, here called
\emph{abstract non-interference ($\ANI$)}, was first partially introduced in~\cite{GM04popl}. The interesting cases are two. The first is $\eta=\id$, which consists in a {\em dynamic} analysis where the observers collects the (possibly huge) set of possible computations and extract in some way (e.g. data mining) properties of interest. The second is an observer performing a {\em static} analysis of the code, in which case usually $\eta=\rho$.
Standard $\NI$ is recovered by setting
$\eta$ to be the identity and $\rho$ to the projection on observable values, i.e., $\rho=\tuple{\topi,\ido}$, where $\topi\defi\lambda x\in\relin.\:\top$ (in this case $\top=\relin$) and $\ido=\lambda x\in\obsin.\:x$ is the identity on observables.

\paragraph{\em Observation dimension for a generic semantic dimension.} 
In \cite{BM08} we showed how to combine a semantic dimension that comprises of 
a trace-based semantics with an
observation dimension. In other words, we showed how we abstract a trace by a state abstraction.
Consider the following concrete trace where each state in the trace is represented as a
pair $\tuple{x^\rel,x^\obs}$.
\[
\tuple{3,1} \to \tuple{2,2} \to \tuple{1,3} \to \tuple{0,4} \to \tuple{0,4}
\]
Suppose also that the trace semantics fixes the observation points to
be each intermediate state of computation.  Now suppose the
observation dimension is that only the parity (represented by the
abstract domain, $\Par$) of the public data can be observed. Then the
observation of the above trace through $\Par$ is:
\[
\tuple{3,odd} \to \tuple{2,even} \to \tuple{1,odd} \to \tuple{0,even} \to \tuple{0,even}
\]
The abstract notion of $\NI$ on traces is formulated by saying that all the
execution traces of a program starting from states with the same property ($\eta$) of public input,
have to provide the same property ($\rho$) of
reachable states.
Therefore, the general notion of $\ANI$ consists simply in abstracting each state
of the computational trace. We thus have
\begin{definition}[Trace-based $\ANI$]\label{def:obspol}
Given a set of observation points $\tO$ and the partitions $\pi_\tI$ and $\pi_\tO$
\[
\forall j\in\tO.\:\forall s_1,s_2\in\Sigma_{\vdash}\:.\:
s_1^{\obs}=s_2^{\obs}\ \Ra\ 
{\dred \rho}(\post^{j}_{P}({\dred \eta}(s_1)))= {\dred \rho}(\post^{j}_{P}({\dred \eta}(s_2)))
\]
\end{definition}
The following example shows the meaning of the observation dimension.
\begin{example}
Consider the program fragment in Ex.~\ref{runningEx} together with the semantic policies shown so far. If we consider the semantic dimension in $(a)$ and an observer able only to analyse in output the sign of integer variables, i.e., $\eta=\id$ and $\rho=\tuple{\topi,\rho_\obs}$, $\rho_\obs=\{\varnothing,<0,\geq 0,\top\}$, trivially we have that, for any pair of initial states $s_{1}$ and $s_{2}$ agreeing on the observable part, $\rho(\post^{6}(s_{1}))=(\geq 0)=\rho(\post^{6}(s_{2}))$. Namely, the program is secure. Consider now the semantic dimension in $(b)$ and the same observation dimension. In this case, non-interference is still satisfied in $o_{6}$ but it fails in $o_{3}$ and in $o_{4}$ since by changing the sign of $h_{2}$ we change the sign of respectively $l_{1}$ and $l_{2}$.  
\end{example}


\paragraph{\em Observation dimension in the completeness formalization.}
In order to model $\ANI$ by using the completeness equation we have to embed the abstractions characterizing the attacker into the abstract domain $\cH$. Hence, let us first note that $\cH=\lambda
X.\tuple{\topi(X^\rel),\id^\obs(X^\obs)}=\lambda X.\:\tuple{\relin,X^\obs}$
where $X^\rel\defi \{x^\rel~|~x\in X\}$, i.e., $\cH$ is the
product of respectively the top and the bottom abstractions in the
lattice of abstract interpretations. 
As far as the semantics is concerned, if the attacker may perform a static analysis w.r.t.\ $\eta$ in input and $\rho$ in output, then the semantics is its best correct approximation, i.e., 
$$\pruabs{\fP}{\eta}{\rho}\defi\lambda x.\:\rho\comp \fP\comp\eta(x)$$  
This means that the right formalization of $\ANI$ via completeness is
\begin{eqnarray}
\puabs{\cH}{}\comp
\pruabs{\fP}{\eta}{\rho}\comp\puabs{\cH}{}=\puabs{\cH}{}\comp \pruabs{\fP}{\eta}{\rho}
\end{eqnarray}
The next theorem shows that the equation above completely characterizes $\ANI$ as a completeness problem. This theorem is a generalization of the one proved for language-based security \cite{GMadj10}.
\begin{theorem}\label{thANI1}\label{thANI2}
Consider $\rho\in\uco(\wp(\inp))$ defining what is observable, $\eta\in\uco(\wp(\inp))$:
\[
P\ \mbox{satisfies $\ANI$ in Eq.~\ref{eqn:nni}}\ \Lra\ \puabs{\cH}{}\comp
\pruabs{\fP}{\eta}{\rho}\comp\cH=\puabs{\cH}{}\comp \pruabs{\fP}{\eta}{\rho}.
\]
\end{theorem}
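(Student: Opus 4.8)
The plan is to prove the two implications separately, after first rewriting the completeness equation in terms of observable projections. The starting point is the explicit shape of $\cH$: since $\cH(X)=\tuple{\relin,X^\obs}$, its fixpoints are exactly the observable cylinders, and two sets collapse under $\cH$ precisely when they share the same observable component, i.e.\ $\cH(X)=\cH(Y)\Lra X^\obs=Y^\obs$. Because the observer of Eq.~\ref{eqn:nni} analyses only observable outputs, $\rho$ may be taken at least as abstract as $\cH$ on the output, so $\cH\comp\rho=\rho$ (without this one simply keeps the outer $\cH$ and argues throughout with observable projections of outputs). Substituting $\pruabs{\fP}{\eta}{\rho}=\rho\comp\fP\comp\eta$ this collapses both sides, and the equation $\cH\comp\pruabs{\fP}{\eta}{\rho}\comp\cH=\cH\comp\pruabs{\fP}{\eta}{\rho}$ reduces to $\pruabs{\fP}{\eta}{\rho}\comp\cH=\pruabs{\fP}{\eta}{\rho}$. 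By idempotence of $\cH$ and the characterisation above, this is equivalent to the statement that $\pruabs{\fP}{\eta}{\rho}$ is constant on $\cH$-classes, namely $X_1^\obs=X_2^\obs\Ra\pruabs{\fP}{\eta}{\rho}(X_1)=\pruabs{\fP}{\eta}{\rho}(X_2)$.

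With this reformulation in hand, the implication from completeness to $\ANI$ is immediate: I would instantiate the class-invariance at the singletons $X_1=\{x_1\}$ and $X_2=\{x_2\}$, observing that $x_1^\obs=x_2^\obs$ forces $\{x_1\}^\obs=\{x_2\}^\obs$, whence $\rho(\fP(\eta(x_1)))=\rho(\fP(\eta(x_2)))$, which is exactly Eq.~\ref{eqn:nni}. The converse, from $\ANI$ to completeness, is the substantial direction and requires lifting the pointwise guarantee of Eq.~\ref{eqn:nni} to arbitrary input sets. Here I would use that the collecting semantics $\fP$ is additive, so that the output of $\fP$ on $\eta(X)$ is controlled by its behaviour on the states generated from the observable fibres of $X$, together with the closure property of $\rho$: if $\rho(\{s\})=\rho(\{t\})=r$ then $s,t\in r$ and hence $\rho(\{s,t\})=r$, so joining states that already agree on their $\rho$-value adds nothing. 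Combining these, the $\rho$-abstracted output is determined fibre by fibre, and $\ANI$ guarantees that each fibre's contribution depends only on the common observable value; taking the union over the fibres present in $X^\obs$ shows that $\pruabs{\fP}{\eta}{\rho}(X)$ depends only on $X^\obs$, which is the required class-invariance.

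I expect the fibre-to-set lifting in the converse to be the main obstacle, and the delicate point is the interaction of $\eta$ with unions. When $\eta=\id$, or more generally when $\eta$ is attribute independent so that it does not expose correlations between the internal values occurring in a set, the decomposition over fibres goes through cleanly and the argument above closes. For a genuinely relational $\eta$ one cannot replace $\eta(X)$ by the union of its values on singletons, and the right level of generality is recovered by reading the equation through the best correct approximation $\pruabs{\fP}{\eta}{\rho}$ and the fixed definition of $\cH$ as the product of the top abstraction on $\relin$ with the identity on $\obsin$. This is also where the generalisation over \cite{GMadj10} really lives, since there $\rho$ was fixed to $\tuple{\topi,\rho_\obs}$ and the simplification $\cH\comp\rho=\rho$ was automatic rather than a hypothesis on the observable output.
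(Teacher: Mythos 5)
The paper itself offers no proof of this theorem (it is stated as a generalization of the result proved in \cite{GMadj10}), so your proposal has to stand on its own. Its first half does: reading the hypothesis ``$\rho$ defining what is observable'' as $\cH\comp\rho=\rho$ is not only legitimate but indispensable, since without some such restriction the right-to-left implication is false --- take $\eta=\rho=\id$ and a program with lifted semantics $\fP(X)=X^\rel\times\{0\}$ (internal output copies internal input, observable output constant): then $\cH\comp\fP\comp\cH=\cH\comp\fP$ holds, because the outer $\cH$ can only ever constrain observable projections, yet Eq.~\ref{eqn:nni} fails. Given that reduction, your reformulation as invariance of $\pruabs{\fP}{\eta}{\rho}$ on $\cH$-classes and the singleton instantiation proving completeness $\Ra$ $\ANI$ are correct, and notably that direction needs no assumption at all on $\eta$.

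The genuine gap is in the converse, exactly where you located it, but it is worse than you think: it cannot be patched, because the implication $\ANI\Ra$ completeness is \emph{false} for relational $\eta$, which the statement's quantification over all $\eta\in\uco(\wp(\inp))$ (and the paper's explicit remark that $\eta$ ``may be attribute independent or relational'') permits. Concretely, let $\relin=\obsin=\{0,1\}$, let $\fP(X)=\{0\}\times X^\obs$ (observable output copies observable input, internal output constant, so there is no flow whatsoever), let $\rho=\cH$, and let $\eta$ be the closure whose fixpoints are $\varnothing$, all singletons and $\Sigma$ (a Moore family, and not attribute independent). Every singleton is an $\eta$-fixpoint, so both sides of Eq.~\ref{eqn:nni} equal $\relin\times\{x_i^\obs\}$ and $\ANI$ holds; but at $X=\{\tuple{0,0}\}$ we get $\eta(\cH(X))=\eta(\{\tuple{0,0},\tuple{1,0}\})=\Sigma$, so the left-hand side of the completeness equation is $\relin\times\{0,1\}$ while the right-hand side is $\relin\times\{0\}$. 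Hence your closing sentence, that the general case ``is recovered by reading the equation through the best correct approximation'', cannot be turned into an argument: the theorem is only true with $\eta$ restricted, e.g.\ to attribute-independent abstractions, which is the setting of \cite{GMadj10} being generalized. Finally, even in the attribute-independent case your fibre decomposition is not literally available, because product closures are not additive ($\eta=\tuple{\id,\id}$ sends $\{\tuple{0,0},\tuple{1,1}\}$ to the full rectangle $\{0,1\}\times\{0,1\}$), so $\eta(\cH(X))$ is not a union of $\eta$-images of singletons. The argument that does work replaces additivity of $\eta$ by the fixpoint property of closures: for any $r'\in X^\rel$ and any $l\in\eta_\obs(X^\obs)$ one has $\eta_\rel(\{r'\})\times\eta_\obs(\{l\})\subseteq\eta_\rel(X^\rel)\times\eta_\obs(X^\obs)=\eta(X)$, so Eq.~\ref{eqn:nni} and monotonicity place every $\fP(\eta_\rel(\{r\})\times\eta_\obs(\{l\}))$, for arbitrary $r$, inside $\rho(\fP(\eta(X)))$; since $\eta(\cH(X))=\relin\times\eta_\obs(X^\obs)$ is covered by such rectangles, additivity of $\fP$ (not of $\eta$) yields $\fP(\eta(\cH(X)))\subseteq\rho(\fP(\eta(X)))$, hence $\rho(\fP(\eta(\cH(X))))\subseteq\rho(\fP(\eta(X)))$, and the reverse inclusion is monotonicity from $X\subseteq\cH(X)$.
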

If $\eta=\id$, and 
$\rho=\tuple{\rho_\rel,\rho_\obs}\in\uco(\wp(\inp))$, we define
$\puabs{\cH}{\rho}\in\uco(\wp(\inp))$:
$\puabs{\cH}{\rho}\defi\lambda X.\:\cH(X)\comp \rho=\tuple{\relin,\rho_\obs(X^\obs)}\in\uco(\wp(\inp))$. In this case we can rewrite the completeness characterization as
\begin{eqnarray}
\puabs{\cH}{\rho}\comp
\puabs{\fP}{}\comp\puabs{\cH}{}=\puabs{\cH}{\rho}\comp \puabs{\fP}{}
\end{eqnarray}

\subsubsection{The protection dimension}\label{sec:pdp}

Suppose now, we aim at describing a
property on the input representing for which
inputs, we are interested in testing $\NI$ properties \cite{MB10}.
Let $\Sel\in\uco(\wp(\inp))$ such a property, also this input property can be modeled, when possible, as an attribute independent abstraction of states, i.e., $\phi=\tuple{\phi_\rel,\phi_\obs}$.
This component of the $\NI$ dimension specifies {\em what} 
must be protected or dually, what must be declassified. \\
For denotational semantics, standard $\NI$ says that nothing must be
declassified.  Formally, 
we can say that the property $\topi$ has been declassified. 
From the observer perspective, this means that every internal input has
been mapped to $\top$. On the other hand, suppose we want to
declassify the parity of the internal inputs.  Then we do not
care if the observer can analyse any change due to the variation of parity
of internal inputs. For this reason, we only check the variations
of the output when the internal inputs have the same parity
property. Formally, consider an abstract domain $\phi$ --- the selector,
that is a function that maps any internal input to its corresponding parity \footnote{More precisely, $\phi$ maps a set of internal inputs to the join
of the parities obtained by applying $\phi$ to each internal input; the join
of $even$ and $odd$ is $\top$.}.
Then a program $P$ satisfies declassified non-interference ($\DNI$) provided
\begin{equation}
\label{eqn:dni}
\forall x_1,x_2\in\val\:.\:
{\dred \phi}(x_{1})={\dred \phi}(x_{2})\ \Ra\ 
\grass{P}(x_1)^\obs= \grass{P}(x_2)^\obs
\end{equation}
This notion of $\DNI$ has been advanced several times in the literature,
e.g., by~\cite{SM04}; this particular formulation using abstract
interpretation where $\phi\defi\tuple{\phi_\rel,\id^\obs}$ is due to~\cite{GM04popl} and is further explained in
~\cite{M05Aplas} where it is termed ``declassification by allowing''.
The generalization of $\DNI$ to Def.~\ref{def:sempol} is
straightforward. Since we protect the internal inputs, we
simply add the condition ${\dred \phi}(s_{1})={\dred
  \phi}(s_{2})$ to Def.~\ref{def:sempol}. In general, we can consider a different declassification policy $\phi_{j}$ for each observation point $o_{j}$, $j\in\tO$.
\begin{definition}[Trace-based $\DNI$.]\label{def:propol}
Let $\tO$ be a set of observation points, $\pi_\tI$ a partition of inputs and $\pi_\tO$ the partitions of observable outputs. Let $\forall j\in\tO$ $\phi_{j}$ be the input property declassified in $o_{j}:
\forall j\in\tO.\:\forall s_1,s_2\in\Sigma_{\vdash}\:.\:
{\dred \phi_{j}}(s_{1})={\dred \phi_{j}}(s_{2})\Ra\ 
\post^{j}_{P}(s_1)^\obs= \post^{j}_{P}(s_2)^\obs$
\end{definition}
This definition allows a versatile interpretation of the relation between the {\em where} and the {\em what} dimensions for declassification \cite{BM08}. Indeed if 
we have a unique declassification policy, $\phi$, that holds for all the observation points, then it means that $\forall j\in\tO.\phi_{j}=\phi$. On the other hand, if we have explicit declassification, then we have two different choices. We can combine {\em where} and {\em what} supposing that the attacker knowledge can only increase. 
We call this kind of declassification  {\em incremental} declassification. On the other hand, we can combine {\em where} and {\em what} in a stricter way. Namely, the information is declassified only in the particular observation point where it is explicitly declared, but not in the following points. In this case it is sufficient to consider as $\phi_{j}$ exactly the property corresponding to the explicit declassification and we call it {\em localized} declassification. 
This kind of declassification can be useful when we consider the case where the information obtained by the different observation points cannot be combined, for example if different points are observed by different observers, as it can happen in the security protocol context.
%
\paragraph{\em Protection dimension in the completeness formalization.}
Finally, we can model the protection dimension as a completeness problem.
Consider an abstract domain $\phi=\tuple{\phi_\rel,\phi_\obs}\in\uco(\wp(\inp))$, where $\phi_\rel\in\uco(\wp(\relin))$ and $\phi_\obs\in\uco(\wp(\obsin))$. In this case we consider the concrete semantics $\grass{P}$ of the program, since the property $\phi$ is only used for deciding/selecting when we have to check whether non-interference is satisfied or not. 
For this reason, the property $\phi$ is embedded in the input abstract domain in a way similar to what we have done for the output observation \cite{BM08}.\\ 
Consider any $o \in X^\obs$. Define the set
$H_o\defi\{r\in\relin~|~\tuple{r,o}\in X\}$; i.e., given a value $o$,
$H_o$ contains all the relevant values associated with $o$ in $X$. Then the selecting abstract domain, $\prabs{\cH}{\phi}(X)$, corresponding to $X$,
is defined as $ \prabs{\cH}{\phi}(X) = \bigcup_{o\in X^\obs}
\phi_\rel(H_o)\times\phi_\obs(o) $. Note that the domain $\cH$, for standard $\NI$, is the instantiation of $\prabs{\cH}{\phi}$, where $\phi_\rel$ maps
any set to $\top$ and $\phi_\obs$ maps any set to itself. 

Let $\phi\in\uco(\wp(\inp))$, selecting the inputs on which to check non-interference, then we define the following completeness equation
\begin{eqnarray}
\cH\comp\grass{P}\comp\prabs{\cH}{\phi} = \cH\comp\grass{P}
\end{eqnarray}
\noindent
Now we can connect
$\prabs{\cH}{\phi}$ to $\DNI$:
\begin{theorem}\label{th:decla}
$P$ satisfies
$\DNI$ w.r.t.\ $\phi$ iff
$\cH\comp\grass{P}\comp\prabs{\cH}{\phi}=\cH\comp\grass{P}$.
\end{theorem}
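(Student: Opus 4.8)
The plan is to unfold the completeness equation $\cH\comp\grass{P}\comp\prabs{\cH}{\phi}=\cH\comp\grass{P}$ (reading $\grass{P}$ as its additive lift to $\wp(\Sigma)$, as the equation requires) into an elementary condition on observable projections and then match it against $\DNI$. First I would record two structural facts. Since $\cH(Y)=\relin\times Y^\obs$ depends on $Y$ only through its observable projection, $\cH(\grass{P}(X))$ is determined by $(\grass{P}(X))^\obs$; hence the equation holds for all $X$ iff $(\grass{P}(\prabs{\cH}{\phi}(X)))^\obs=(\grass{P}(X))^\obs$ for all $X\subseteq\Sigma$. Second, because $\phi_\rel$ and $\phi_\obs$ are extensive, each block $\phi_\rel(H_o)\times\phi_\obs(o)$ contains $H_o\times\{o\}$, so $X\subseteq\prabs{\cH}{\phi}(X)$; by monotonicity of $\grass{P}$ the inclusion $(\grass{P}(X))^\obs\subseteq(\grass{P}(\prabs{\cH}{\phi}(X)))^\obs$ is then automatic, and the whole equation collapses to the single reverse inclusion
\[
(\star)\qquad (\grass{P}(\prabs{\cH}{\phi}(X)))^\obs\subseteq(\grass{P}(X))^\obs\quad\text{for all }X\subseteq\Sigma .
\]

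For the direction ``equation $\Ra$ $\DNI$'' I would instantiate $(\star)$ on singletons. Taking $X=\{x_1\}$ gives $\prabs{\cH}{\phi}(\{x_1\})=\phi_\rel(\{x_1^\rel\})\times\phi_\obs(\{x_1^\obs\})$, i.e.\ exactly the set of inputs that are $\phi$-indistinguishable from $x_1$. If $\phi(x_1)=\phi(x_2)$ then $x_2^\rel\in\phi_\rel(\{x_2^\rel\})=\phi_\rel(\{x_1^\rel\})$ and likewise $x_2^\obs\in\phi_\obs(\{x_1^\obs\})$, so $x_2\in\prabs{\cH}{\phi}(\{x_1\})$; then $(\star)$ forces $\grass{P}(x_2)^\obs\in(\grass{P}(\{x_1\}))^\obs=\{\grass{P}(x_1)^\obs\}$, which is precisely $\DNI$. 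This half needs nothing about $\phi$ beyond extensivity and idempotence.

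For the converse ``$\DNI$ $\Ra$ equation'', i.e.\ $\DNI\Ra(\star)$, I would fix $X$ and an arbitrary $y\in\prabs{\cH}{\phi}(X)$ and try to produce some $x\in X$ with $\phi(x)=\phi(y)$; $\DNI$ then yields $\grass{P}(y)^\obs=\grass{P}(x)^\obs\in(\grass{P}(X))^\obs$, closing $(\star)$. By definition of $\prabs{\cH}{\phi}$ there is $o\in X^\obs$ with $y^\rel\in\phi_\rel(H_o)$ and $y^\obs\in\phi_\obs(o)$. The crux — and the step I expect to be the main obstacle — is to convert membership in the \emph{closures} $\phi_\rel(H_o)$ and $\phi_\obs(o)$ into an actual witness $x=\tuple{r,o}$ with $r\in H_o$ lying in the \emph{same} $\phi$-block as $y$, namely $\phi_\rel(\{r\})=\phi_\rel(\{y^\rel\})$ and $\phi_\obs(\{o\})=\phi_\obs(\{y^\obs\})$. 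This is exactly where $\phi$ must be a \emph{partitioning} closure: I would isolate as a lemma that, for such $\phi$, $z\in\phi(S)$ holds iff $\phi(\{z\})=\phi(\{s\})$ for some $s\in S$, so that $\phi_\rel(H_o)$ is precisely the union of the blocks met by $H_o$ and $\phi_\obs(o)$ is the block of $o$; the witness $r$ then exists and the argument goes through. Note that the standard-$\NI$ instance $\phi_\rel=\top$, $\phi_\obs=\id$ is partitioning, which is consistent with the earlier remark that $\cH$ is the corresponding instantiation of $\prabs{\cH}{\phi}$ and with the completeness characterisation of Thm.~\ref{thANI1}. A non-partitioning $\phi_\rel$ can instead make $\DNI$ hold vacuously while $(\star)$ fails, so this structural hypothesis on $\phi$ is genuinely where the content of the forward direction lives.
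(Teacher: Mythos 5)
Your two unconditional steps are correct, and they are the right way to unfold the statement: since $\cH(Y)=\relin\times Y^\obs$ is determined by $Y^\obs$ alone, the equation is equivalent to $(\grass{P}(\prabs{\cH}{\phi}(X)))^\obs=(\grass{P}(X))^\obs$ for all $X$; extensivity of $\phi_\rel$ and $\phi_\obs$ gives $X\subseteq\prabs{\cH}{\phi}(X)$, so only your inclusion $(\star)$ is at issue; and instantiating $(\star)$ on singletons yields $\DNI$ exactly as you argue (indeed only extensivity is used there --- idempotence never enters). Note that the paper itself contains no proof of Theorem~\ref{th:decla}; it is a survey statement imported from \cite{BM08,GMadj10}, so there is no argument of the paper's to compare yours against step by step.

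Your diagnosis of the converse direction is the substantive point, and it is right: the partitioning hypothesis is genuinely necessary, not an artifact of your witness-finding strategy. To make your necessity claim concrete: take $\phi_\obs=\id$, $\phi_\rel=\Sign$, and let $P$ write $0$, $1$, or $2$ into an observable variable according as the internal input is positive, negative, or zero. On singletons $\Sign$ induces the partition $\{n>0\},\{0\},\{n<0\}$, so $\phi(x_1)=\phi(x_2)$ forces equal strict sign and $\DNI$ holds (non-vacuously, so ``vacuously'' is the one slip in your write-up). Yet for $X$ with $H_o=\{1,-1\}$ one has $\phi_\rel(H_o)=\Sign(\{1,-1\})=\nint\ni 0$, hence $(\grass{P}(\prabs{\cH}{\phi}(X)))^\obs=\{0,1,2\}\neq\{0,1\}=(\grass{P}(X))^\obs$, and the completeness equation fails. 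So the theorem read literally, for arbitrary $\phi\in\uco(\wp(\inp))$, is false in the ``only if'' direction; it becomes true under your lemma that $z\in\phi(S)$ iff $\phi(\{z\})=\phi(\{s\})$ for some $s\in S$. That hypothesis is implicit in the paper rather than dispensable: every instance it actually uses is partitioning (parity; $\topi$ paired with $\ido$ for standard $\NI$), and its own footnote describing $\phi$ as mapping a set of internal inputs to the \emph{join of the blocks of its elements} is precisely the partitioning behaviour. Read as a proof of the theorem with that hypothesis made explicit, your proposal is complete and correct.
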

\subsection{All together...}\label{sec:ani}
The following definition combines all the three dimensions obtaining the notion of $\ANI$ as a generalization
of the standard one \cite{GMproof10}. We can say that the idea of abstract non-interference is that a
program $P$ satisfies abstract non-interference relatively to a pair
of observations $\Oin$ and $\Oout$, and to a property $\Sel$, denoted
$\gani{\Sel}{\Oin}{P}{\Oout}$, if, whenever the input values have the
same property $\Sel$ then the best correct approximation of the
semantics of $P$, w.r.t.\ $\Oin$ in input and $\Oout$ in output, does not
change. This captures precisely the intuition that
$\Sel$-indistinguishable input values provide
$\Oin,\Oout$-indistinguishable results, for this reason it can still
be considered a non-interference policy.
\begin{definition}[(Declassified) Abstract
    non-interference $\DANI$]\label{def-abs-sec}\ \\
Let $\Sel,\Oin\in\uco(\wp(\inp)),\Oout\in\uco(\wp(\obsout))$.
\begin{center}
\framebox{
\begin{tabular}{c}
A program $P$ satisfies $\gani{\Sel}{\Oin}{P}{\Oout}$ w.r.t.\ $\pi_\tI$ and $\pi_\tO$ if\\ 
$\forall
x_1,x_2\in\inp\:.\:
{\dred \Sel}(x_1)={\dred \Sel}(x_2)\ \Ra\ {\dred \Oout}((\grass{P}({\dred \Oin}(x_1))))=
{\dred \Oout}((\grass{P}({\dred \Oin}(x_2))))$
\end{tabular}}
\end{center}
\end{definition}
\vspace{.5cm}
For instance, in Eq.~\ref{defNI} we have $\Sel=\tuple{\topi,\ido}$,
$\Oin=\id$ and $\Oout=\id$.
%
In the following, we define closures on $\val^{n}$ by using closures on $\val$. In this case we abuse notation by supposing that $\rho(\tuple{x,y})=\tuple{\rho(x),\rho(y)}$.
\begin{example}\label{definite}
  Consider the property $\Sign$ and $\Par$ defined in Sect.~\ref{intro}.
  Consider $\Pobsin=\{x\}$, $\Prelin=\{y\}$ and
  $\inp=\nint$. Let $\Sel=\Sign$, $\Oin=\id$, $\Oout=\Par$, and consider the program fragment:
\[
P~\defi~ x:=2*x*y^2;
\]

\noindent
In the standard notion of non-interference
there is a flow of information from variable $y$ to variable $x$,
since $x$ depends on the value of $y$, i.e., the statement does not
satisfy non-interference.\\ Let us consider
$\gani{\Sign}{\id}{P}{\Par}$. If 
$\Sign(\tuple{x,y})=\tuple{\Sign(x),\Sign(y)}=\tuple{0+,0+}$, then the possible outputs are always in
$\ev$, indeed the result is always even because there is a
multiplication by $2$. The same holds if $\Sign(\tuple{x,y})=\tuple{0-,0-}$.
Therefore any possible output value, with a fixed observable input,
has the same observable abstraction in $\Par$, which is
$\ev$. Hence $\gani{\Sign}{\id}{P}{\Par}$ holds.
\end{example}
\paragraph{\em The completeness formalization.}
All the completeness characterizations provided embed the abstraction in a different position inside the completeness equation. This makes particularly easy to combine all the completeness characterizations in order to obtain the completeness formalization of $\DANI$.
\begin{theorem}
$P$ satisfies
$\DANI$ w.r.t.\ $\phi,\eta\in\uco(\wp(\inp))$ and $\rho\in\uco(\wp(\oup))$ iff
\[
\begin{array}{ccc}
\mbox{Static attack}:&\qquad &\mbox{Dynamic attack}:\\
\puabs{\cH}{}\comp\pruabs{\grass{P}}{\eta}{\rho}\comp\prabs{\cH}{\phi}=\puabs{\cH}{}\comp\pruabs{\grass{P}}{\eta}{\rho}& &\puabs{\cH}{\rho}\comp\puabs{\grass{P}}{}\comp\prabs{\cH}{\phi}=\puabs{\cH}{\rho}\comp\puabs{\grass{P}}{}
\end{array}
\]
\end{theorem}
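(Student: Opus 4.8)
The plan is to establish the two completeness equations separately, each by combining the two characterisations already proved: Theorem~\ref{thANI2}, which isolates the observation dimension (the input abstraction $\eta$ and the output abstraction $\rho$) against the baseline protection closure $\cH$, and Theorem~\ref{th:decla}, which isolates the protection dimension (the selector $\phi$) against the identity observation. The structural fact stressed just before the statement is what makes this modular: the three abstractions occupy disjoint slots of the equation. The selector appears only on the input side through $\prabs{\cH}{\phi}$, the observations $\eta,\rho$ only sandwich the concrete denotational semantics inside the best correct approximation $\pruabs{\grass{P}}{\eta}{\rho}=\lambda x.\rho\comp\grass{P}\comp\eta(x)$, and the left closure $\cH$ (resp.\ $\puabs{\cH}{\rho}$) merely collects the output. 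Since these slots do not interact, the combined equation is obtained by substituting $\prabs{\cH}{\phi}$ for the input closure $\cH$ of Theorem~\ref{thANI2}.

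For the static attack equation I would first unfold Def.~\ref{def-abs-sec} with $\Sel=\phi$, $\Oin=\eta$, $\Oout=\rho$: for all $x_1,x_2$ with $\phi(x_1)=\phi(x_2)$ one must have $\rho(\grass{P}(\eta(x_1)))=\rho(\grass{P}(\eta(x_2)))$. I would then show this equivalent to $\puabs{\cH}{}\comp\pruabs{\grass{P}}{\eta}{\rho}\comp\prabs{\cH}{\phi}=\puabs{\cH}{}\comp\pruabs{\grass{P}}{\eta}{\rho}$ by the same two-step argument used for the component theorems. Applying $\prabs{\cH}{\phi}$ first groups the inputs sharing a common observable $\phi_\obs$-value and saturates the associated internal values by $\phi_\rel$, so the left side effectively ranges over $\phi$-equivalence classes; applying $\pruabs{\grass{P}}{\eta}{\rho}$ runs the abstracted semantics; and the outer $\cH$ discards the internal output, retaining only the $\rho$-abstracted observable. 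Equality of the two sides then says exactly that pre-grouping the inputs by $\phi$ leaves the observable output unchanged, i.e.\ $\phi$-indistinguishable inputs yield $\rho$-indistinguishable outputs, which is $\DANI$.

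For the dynamic attack equation I would invoke the alternative form of the observation dimension recorded immediately after Theorem~\ref{thANI2}: when $\eta=\id$ the output abstraction $\rho=\tuple{\rho_\rel,\rho_\obs}$ can be absorbed into the closure $\puabs{\cH}{\rho}=\lambda X.\tuple{\relin,\rho_\obs(X^\obs)}$, giving $\puabs{\cH}{\rho}\comp\puabs{\grass{P}}{}\comp\puabs{\cH}{}=\puabs{\cH}{\rho}\comp\puabs{\grass{P}}{}$ for plain $\ANI$. Replacing $\puabs{\cH}{}$ on the input side by $\prabs{\cH}{\phi}$ adds the protection dimension in the manner of Theorem~\ref{th:decla}, producing the stated dynamic equation. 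Its equivalence with $\DANI$ follows by combining the reasoning of the static case with the bridging identity $\cH\comp(\rho\comp\grass{P})=\puabs{\cH}{\rho}\comp\grass{P}$ on the arguments $\prabs{\cH}{\phi}(X)$ and $X$.

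I expect the main obstacle to be precisely this last identity: checking that moving $\rho$ out of the output slot of the semantics and into the closure $\puabs{\cH}{\rho}$ is sound. This step requires $\rho$ to be attribute-independent, $\rho=\tuple{\rho_\rel,\rho_\obs}$, so that the internal component $\rho_\rel$ is harmlessly absorbed by the $\top=\relin$ that $\cH$ already places on the internal side, leaving only $\rho_\obs$ acting on the observable output. I would discharge it by a direct pointwise computation on sets of the form $\prabs{\cH}{\phi}(X)$, verifying that the two output closures agree there; once this bridging lemma is in place, both directions of both equivalences reduce to the already-proved component theorems.
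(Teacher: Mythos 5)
Your proposal is correct and takes essentially the same route as the paper: the paper gives no detailed proof of this theorem, justifying it only by the preceding remark that each abstraction ($\prabs{\cH}{\phi}$ in the input slot, $\eta,\rho$ inside the best correct approximation $\pruabs{\grass{P}}{\eta}{\rho}$, and $\cH$ or $\puabs{\cH}{\rho}$ in the output slot) occupies a distinct position in the completeness equation, so that Theorem~\ref{thANI2} and Theorem~\ref{th:decla} compose directly --- which is exactly your modular-substitution argument. Your bridging identity $\cH\comp\rho\comp\grass{P}=\puabs{\cH}{\rho}\comp\grass{P}$, valid for attribute-independent $\rho=\tuple{\rho_\rel,\rho_\obs}$, is likewise just the paper's own definition of $\puabs{\cH}{\rho}$ given immediately after Theorem~\ref{thANI2}, so your proof fills in the details the paper leaves implicit without departing from its intended argument.
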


\section{Abstract Non-Interference in Language-based Security}\label{ANIsec}

In the previous sections we introduced non-interference as a generic notion that can be used/applied in many fields of computer science, i.e., wherever we have to analyze a dependency relation between two sets of variables.
However, this notion was introduced in the field of language-based security \cite{cohen77}.
In this context we consider the same partition of input and output data into two types: private/confidential (the set of values for variables of type $\tH$ is denoted $\val^{\tH}$) and public (the set of values for variables $\tL$ is denoted $\val^{\tL}$). In this case, non-interference requires that by observing the public output, a malicious attacker must not be able to disclose any information about the private input. It is straightforward to note that this is an immediate instantiation of the general notion we provided with attribute independent abstractions, where $\obsin=\obsout=\val^{\tL}$ and $\relin=\relout=\val^{\tH}$. 

In the context of language-based security the  limitation of the standard notion of non-interference is even more relevant, in particular we can observe
that, in general, it results in a extremely restrictive policy. Indeed, non-interference
policies require that any change upon confidential data must not be
revealed through the observation of public data. There are at least
two problems with this approach. On one side, many real systems are
intended to leak some kind of information. On the other side, even if
a system satisfies non-interference, static checking being approximate
could reject it as insecure. Both of these observations address the
problem of {\em weakening\/} the notion of non-interference for both
characterizing the information that {\em is allowed\/} to flow, and
modeling {\em weaker\/} attackers that can observe only some
properties of public data.  Abstract non-interference can provide formal tools for reasoning about both these kind of weakenings of non-interference in language-based security.
There exists a large literature on the
problem of weakening non-interference (see \cite{M05Aplas} and \cite{BM08} for a formalization of the relation between abstract non-interference and existing approaches).
In the following, we show how we can choose and combine the different policies obtaining the different notions of abstract non-interference introduced in the last years for language-based security.

\subsection{{\em Who} is attacking?}
Let us focus first on the attacker. By attacker we mean the agent aiming at learning some confidential information by {\em analyzing}, to the best of its possibilities, the system under attack. 
In general the attacker is characterized by the public observation of both input and output \cite{GM04popl}.
Depending on the dimension (observation or protection) we use for modeling the {\em input} attacker observation we obtain different notions of abstract non-interference. In particular, if the attacker {\em selects} the computations whose inputs have a fixed public property,  we obtain the so called {\em narrow} non-interference \cite{GM04popl}. If, instead the attacker performs a {\em static} analysis of the code with a generic input observation that may not be the same as the output observation, then we obtain the so called (strictly) abstract approach. In both cases the attacker is characterised by means of two abstractions: the abstract observation of the public {\em input} $\delta$ and the abstract observation of the  public {\em output} $\rho$, both modelled as abstract domains.
\subsubsection{The {\em narrow} approach to non-interference}
In \cite{GM04popl,GMadj10} the notion of {\em narrow (abstract)
non-interference} ($\NANI$ for short) represents a first weakening of standard
non-interference relative to a given model of an attacker. 
The idea behind this notion is to consider attackers that can only dynamically analyze the program by analysing properties of the collected executions. In particular the input public observation {\em selects} the computations that have to satisfy non-interference, while the public output observation represents {\em dynamic observational} power of the attacker.
Formally, given $\delta=\phi_\tL,\:\rho\in\uco(\wp(\val^{\tL}))$, 
respectively, the input and the output observation, we say that a  
program $P$ satisfies narrow non-interference ($\NANI$), written
$\dsecr{\phi_\tL}{P}{\rho}$, if 
\begin{center}
\framebox{$
\forall x_1,x_2\in\val\:.\:
{\dred \phi_\tL}(x_1^{\tL})={\dred \phi_\tL}(x_2^{\tL})\ \Ra\ 
{\dred \rho}((\grass{P}(x_1))^\tL)= {\dred \rho}((\grass{P}(x_2))^\tL)
$}
\end{center}
This notion is a particular instantiation of $\DANI$ where $\eta=\id$ and $\phi=\tuple{\id_\tH,\phi_\tL}$ (where $\id_\tH\defi\lambda x\in\val^\tH.\:x$ and $\id$ is defined on $\val$).
This notion corresponds to other weakenings of non-interference existing in the literature (e.g., PERs \cite{SS-HOSC01}).
\subsubsection{The {\em abstract} approach to non-interference}
A different abstract interpretation-based approach to non-interference can be obtained by modelling attackers as static analyzers of programs. In this case, we check non-interference by considering the best correct approximation of the program semantics in the abstract domains modelling the attacker.
Formally, the idea is to compute the semantics on abstract values, obtaining again a notion of non-interference where only the private input can vary. 
What  we obtain is a policy such
that when the attacker is able to {\em observe} the property $\delta=\eta_\tL$ of
public input, and the property $\rho$ of public output, then no
information flow concerning the private input
is detected by observing the public output.  We call this
notion {\em abstract non-interference\/} ($\ANI$ for short).  A program $P$
satisfies abstract non-interference, written
$\asecr{\eta_\tL}{P}{\rho}$, if
\begin{center}
\framebox{$
\forall x_1,x_2\in\val\:.\: x_1^{\tL}=x_2^{\tL}\ \Ra\ 
\rho(\grass{P}(x_1^{\tH},{\dred \eta_\tL}(x_{1}^{\tL}))^\tL)= \rho(\grass{P}(x_2^{\tH},{\dred \eta_\tL}(x_{2}^{\tL}))^\tL)
$}\end{center}
where we abuse notation denoting by $\grass{P}$ the additive lift, to sets of states, of the denotational semantics of $P$. This notion is an instantiation of $\DANI$ where $\phi=\id$ and $\eta=\tuple{\id_\tH,\eta_\tL}$.
Note that $\asecr{\id}{P}{\id}$ (equivalent to $\dsecr{\id}{P}{\id}$) models exactly (standard)
non-interference. 
\begin{proposition}
The notion of $\ANI$ $\asecr{\delta}{P}{\rho}$ defined above is equivalent to the standard notion of $\ANI$ \cite{GM04popl}, i.e.,
$
\forall x_1,x_2\in\val\:.\: \delta(x_1^{\tL})=\delta(x_2^{\tL})\ \Ra\ 
\rho(\grass{P}(x_1^{\tH},\delta(x_{1}^{\tL}))^\tL)= \rho(\grass{P}(x_2^{\tH},\delta(x_{2}^{\tL}))^\tL)
$.
\end{proposition}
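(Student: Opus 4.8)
The plan is to prove the two implications separately, observing first that the definition of $\asecr{\delta}{P}{\rho}$ given above and the standard formulation share \emph{exactly} the same conclusion, namely $\rho(\grass{P}(x_1^{\tH},\delta(x_1^{\tL}))^{\tL})=\rho(\grass{P}(x_2^{\tH},\delta(x_2^{\tL}))^{\tL})$; they differ only in their hypotheses, the definition above requiring the concrete agreement $x_1^{\tL}=x_2^{\tL}$ and the standard notion requiring only the abstract agreement $\delta(x_1^{\tL})=\delta(x_2^{\tL})$. Since $\delta$ is a function, the concrete hypothesis is strictly stronger than the abstract one, so the whole content of the proposition is that, thanks to the $\delta$ already occurring \emph{inside} the semantics in the conclusion, the two hypotheses in fact guard the very same conclusion.

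First I would dispatch the direction from the standard notion to the definition above. This is purely logical: whenever $x_1^{\tL}=x_2^{\tL}$ we also have $\delta(x_1^{\tL})=\delta(x_2^{\tL})$, so any pair of inputs satisfying the hypothesis of $\asecr{\delta}{P}{\rho}$ already satisfies the hypothesis of the standard notion, which then delivers the common conclusion. No property of $\delta$, $\rho$ or $\grass{P}$ beyond $\delta$ being well defined is used here.

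The converse is where the actual work lies, and I would establish it by a hybrid (mixed-state) argument. Assume $\asecr{\delta}{P}{\rho}$ and take inputs $x_1,x_2$ with $\delta(x_1^{\tL})=\delta(x_2^{\tL})$. The idea is to introduce the intermediate input $z\defi\tuple{x_2^{\tH},x_1^{\tL}}$, which keeps the private part of $x_2$ but the public part of $x_1$. Since $z^{\tL}=x_1^{\tL}$, the concrete hypothesis of $\asecr{\delta}{P}{\rho}$ holds for the pair $x_1,z$ and yields $\rho(\grass{P}(x_1^{\tH},\delta(x_1^{\tL}))^{\tL})=\rho(\grass{P}(x_2^{\tH},\delta(x_1^{\tL}))^{\tL})$. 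It then remains only to rewrite $\delta(x_1^{\tL})$ as $\delta(x_2^{\tL})$ on the right-hand side, which is immediate since these two arguments are \emph{equal} by assumption, so the two instances of $\grass{P}(x_2^{\tH},\cdot)$ coincide. Chaining the two equalities gives exactly the conclusion of the standard notion.

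The step I expect to be the genuine, if modest, obstacle is the choice of the mixed state $z$: one cannot apply the hypothesis to $x_1,x_2$ directly, since they need not agree concretely on the public part, so the comparison must be routed through a state that agrees with $x_1$ on the public input (to trigger the definition above) and with $x_2$ on the private input (to match the target). Once $z$ is in place the remaining manipulations are one-line rewritings, and notably neither idempotency of $\delta$ nor any property of $\rho$ or of the semantics is required, which is what makes the equivalence hold for an arbitrary output observation and an arbitrary program.
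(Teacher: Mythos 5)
Your proposal is correct and follows essentially the same route as the paper's own proof: the trivial direction is handled identically, and your hybrid state $z=\tuple{x_2^{\tH},x_1^{\tL}}$ is exactly what the paper uses implicitly when it applies hypothesis $(1)$ to conclude $\rho(\grass{P}(x_1^{\tH},\delta(x_{1}^{\tL}))^\tL)= \rho(\grass{P}(x_2^{\tH},\delta(x_{1}^{\tL}))^\tL)$ before rewriting $\delta(x_1^{\tL})$ as $\delta(x_2^{\tL})$. Your version merely makes that intermediate state explicit, which is a fair presentational improvement but not a different argument.
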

\begin{proof}
We have to prove that $\forall x_1,x_2\in\val$ we have that $(1) x_1^{\tL}=x_2^{\tL}\ \Ra\ \rho(\grass{P}(x_1^{\tH},\delta(x_{1}^{\tL}))^\tL)= \rho(\grass{P}(x_2^{\tH},\delta(x_{2}^{\tL}))^\tL)$ iff  $(2) \delta(x_1^{\tL})=\delta(x_2^{\tL})\ \Ra\ 
\rho(\grass{P}(x_1^{\tH},\delta(x_{1}^{\tL}))^\tL)= \rho(\grass{P}(x_2^{\tH},\delta(x_{2}^{\tL}))^\tL)$. Suppose $(1)$ holds, and consider $x_1^\tL\neq x_2^\tL$ such that $\delta(x_1^{\tL})=\delta(x_2^{\tL})$. From $(1)$ we have that $\rho(\grass{P}(x_1^{\tH},\delta(x_{1}^{\tL}))^\tL)= \rho(\grass{P}(x_2^{\tH},\delta(x_{1}^{\tL}))^\tL)$, but since $\delta(x_1^{\tL})=\delta(x_2^{\tL})$ we have also $\rho(\grass{P}(x_1^{\tH},\delta(x_{1}^{\tL}))^\tL)= \rho(\grass{P}(x_2^{\tH},\delta(x_{2}^{\tL}))^\tL)$. Suppose $(2)$ holds and that $x_1^{\tL}=x_2^{\tL}$, then trivially $\delta(x_1^{\tL})=\delta(x_2^{\tL})$ and by $(2)$ we have $\rho(\grass{P}(x_1^{\tH},\delta(x_{1}^{\tL}))^\tL)= \rho(\grass{P}(x_2^{\tH},\delta(x_{2}^{\tL}))^\tL)$.
\end{proof}

\subsection{{\em What} the attacker may disclose?}

In this section, we focus on another important aspect: what is released or protected. 
Again, depending on which dimension (observation or protection) we use for modeling the information to protect we obtain different approaches to declassification \cite{GM04popl,BM08}. In particular, if the observation dimension is specified, then we are fixing the property that must not be released, namely we model the property whose variation must not be observable, obtaining the so called {\em block} approach to non-interference. On the other hand, if we use the protection dimension, then we are fixing the property that may be released, namely the property of private inputs such that non-interference has to be satisfied for all the private inputs with the same property. In other words, the property whose variation {\em may} interfere. In this case we obtain the so called {\em allow} approach to non-interference \cite{GM04popl}. Note that in this context, we consider declassification while ignoring {\em where} \cite{SSJCS07} declassification takes place.
\subsubsection{The {\em block} approach to declassification}
Let us describe the {\em block approach} introduced in the original notion of abstract non-interference 
\cite{GM04popl}. Note that in standard non-interference we have to protect the {\em value} of private data. If we interpret this fact from the point of view of what we have to keep secret, then we can say that we want to block the {\em identity} property of the private data domain. In the definition, we make the private input range in the domain of values and we check if these
changes are detectable from the observation of the public output.
Suppose, for instance, that we are interested in keeping secret the parity of input private data. Then we make the private input range over the abstract domain of parity, so we check if there is a variation in the public output only when the parity of the private input changes\footnote{This idea was partially introduced  in the use of {\em abstract variables} in the semantic approach of Joshi and Leino \cite{joshiLeino00}, where the variables represent set of values instead of single values. But in \cite{joshiLeino00} only an example of these ideas is provided.}.
Hence, the fact that we want to protect parity is modelled by observing that the {\em distinction} between even and odd private inputs corresponds exactly to what must not be visible to a public output observer. Formally, consider an abstract domain $\eta_\tH\in\uco(\val^{\tH})$ modelling the private information we want to keep secret. A program $P$ satisfies non-interference declassified via blocking ($\BDNI$ for short), written $\gbsecr{\id}{P}{\eta_\tH}{\id}$, if
\begin{center}
\framebox{$ 
\forall x_1,x_2\in\val\:.\:x_1^{\tL}=x_2^{\tL}\ \Ra\ 
\grass{P}({\dred \eta_\tH}(x_1^{\tH}),x_{1}^{\tL})^\tL= \grass{P}({\dred \eta_\tH}(x_2^{\tH}),x_{2}^{\tL})^\tL
$}\end{center}
This is a particular instantiation of $\DANI$ where $\phi=\id$ and $\eta=\tuple{\eta_\tH,\id_\tL}$.
We can obtain the narrow and the abstract declassified forms simply by combining this notion with each one of the previous notions \cite{GM04popl}. In this case it is the semantics that {\em blocks} the flow of information, exactly as the square operation hides the input sign of an integer.
%
%
%
\subsubsection{The {\em allow} approach to declassification}
Finally, let us introduce the {\em allow approach} to declassification. This is a well-known approach, which has been introduced and enforced in several ways in the literature \cite{M05Aplas,SSJCS07}. The idea is to fix which aspects of the private information can be observed by an unclassified observer. 
From this point of view, the standard notion of non-interference, where nothing has to be observed, can be interpreted by saying that only the property $\topi$ (i.e., ``I don't know the private property'') is declassified. This corresponds to saying that we have to check non-interference only for those private inputs that have the same declassified property, noting that all the values are mapped to $\top$ and hence have the same abstract property.
Suppose, for instance,  that we want to downgrade the parity. This means that we do not care if the observer sees any change due to the variation of this private input property. For this reason, we only check the variations of the output when the private inputs have the same parity property. Formally, consider an abstract domain $\phi_\tH$ modelling the private information that is downgraded. A program $P$ satisfies non-interference declassified via allowing ($\ADNI$ for short), written $\gasecr{\id}{P}{\phi_\tH}{\id}$, if
\begin{center}
\framebox{$ 
\forall x_1,x_2\in\val\:.\:x_1^{\tL}=x_2^{\tL}\ \wedge\ {\dred \phi_\tH}(x_{1}^{\tH})={\dred \phi_\tH}(x_{2}^{\tH})\ \Ra\ 
\grass{P}(x_1^{\tH},x_{1}^{\tL})^\tL= \grass{P}(x_2^{\tH},x_{2}^{\tL})^\tL
$}\end{center}
Also this notion is an instantiation of $\DANI$ where $\eta=\id$ and $\phi=\tuple{\phi_\tH,\id_\tL}$.
This is a weakened form of standard $\NI$ where the test cases are reduced, allowing some confidential information to flow. \\

\noindent
In order to make clear the differences and the analogies between the several notions introduced, we collect together, in the next table, all these notions.
\begin{center}
\begin{tabular}{|l|c|c|}
\hline
{\sc Observation dimension} & {\sc Declassification dimension}&{\sc Property}\\
\hline
{\sl Input:} Dynamic protection of inputs ($\phi_\tL$)  & No Declassification &  $\NANI$\\
{\sl Output:} Dynamic analysis of computations ($\rho$)&&$\dsecr{\phi_\tL}{P}{\rho}$\\
\hline
Static analysis of code:&No Declassification&$\ANI$\\
$\eta_\tL$ in input, $\rho$ in output&&$\asecr{\eta_\tL}{P}{\rho}$\\
\hline
No abstraction & $\eta_\tH$ to protect &$\BDNI$\\
&&$\gsecr{\id}{P}{\eta_\tH}{\id}$\\
\hline
No abstraction & $\phi_\tH$ declassified &$\ADNI$\\
&&$\gnsecr{\id}{P}{\phi_\tH}{\id}$\\
\hline\end{tabular}
\end{center}

\subsection{{\em Where} the attacker may observe?}\label{where}
In this section, we show that we can exploit the semantic dimension in order to model also {\em where} the attacker may observe public data. In particular, we need to consider trace-based $\NI$ for fixing also the observable program points, where the attacker in some way can access the intermediate observable computations.
Let us recall trace-based $\NI$ for language-based security
\[
\begin{array}{c}
\forall j\in\tO.\:\forall s_{1},s_{2}\in \Sigma_{\vdash}.\:
 s_{1}^\tL=s_{2}^\tL\ \Ra\ \post^{j}_{P}(s_{1})^\tL=\post^{j}_{P}(s_{2})^\tL
\end{array}
\]
and its completeness characterization in terms of weakest precondition semantics:
\[
\cH\comp\widetilde{\pre}^{j}\comp\cH=\widetilde{\pre}^{j}\comp\cH.
\] 

These results say that, in order to check $\DNI$ on traces, we
would have to make an analysis for each observable program point $j$, combining,
afterwards, the information disclosed. In order to make only one
iteration on the program even when dealing with traces, our basic idea
is to combine the weakest precondition semantics, computed at each
observable point of the execution, together with the observation of
public data made at the particular observation point.

We will describe our approach on our running example,
Ex.~\ref{ex:infrelds}, where $o_{i}$ and $p_{i}$ denote
respectively the observation and the protection points of the program
$P$.
\begin{example}\label{ex:infrelds}
\[
P=
\left [
\begin{array}{lll}
{\dred p_{0}}\ra&\qquad \{{\dred h_{2}\modu 2=a}\}&\la {\dgreen o_{0}}\\
&h_{1}:=h_{2};&\\
&h_{2}:=h_{2}\modu 2;&\\
&l_{1}:=h_{2};&\\
&h_{2}:=h_{1}&\\
&l_{2}:=h_{2};&\\
&l_{2}:=l_{1}&\\
&\qquad \{{\dgreen l_{1}=l_{2}=a}\}&\la {\dgreen o_{6}}
\end{array}
\right .
\]
 In this example, the observation in
output of $l_{1}$ or $l_{2}$ allows us to derive the information
about the parity of the secret input $h_{2}$.
\end{example}
%
%
In the presence of explicit declassifications, it allows a precise characterization of the  relation between the {\em where} and the {\em what} dimensions of non-interference policies. The idea is to track (by using the wlp computation) the information disclosed in each observation point till the beginning of the computation, and then to compare it with the corresponding declassification policy.
%
%
The next example shows how we track the information disclosed in each observable program point.
We consider as the set of observable program points, $\tO$, the same set used for gradual release, namely, the program points corresponding to low events.  However, in general, our technique allows $\tO$ to be any set of program points. When there is more than one observation point, we will use the notation $[\Phi]^{O}$ to denote that the information described by the assertion $\Phi$ can be derived from the set of observation points in $O$.

\begin{example}\label{ex:infreltr}
\[
P=
\left [
\begin{array}{lll}
{\dred p_{0}}\ra&\qquad \{{\dred  [h_{2}=b]^{o_{5}},[h_{2}\modu 2=a]^{o_{3},o_{5},o_{6}}},{\dblue [l_{2}=c]^{o_{3},o_{0}},}{\dgreen [l_{1}=d]^{o_{0}}}\}&\la {\dgreen o_{0}}\\
&h_{1}:=h_{2};&\\
&\qquad \{{\dblue [h_{1}=b]^{o_{5}},[h_{2}\modu 2=a]^{o_{3},o_{5},o_{6}},[l_{2}=c]^{o_{3}}}\}&\\
&h_{2}:=h_{2}\modu 2;&\\
&\qquad\{ {\dblue [h_{1}=b]^{o_{5}},[h_{2}=a]^{o_{3},o_{5},o_{6}},[l_{2}=c]^{o_{3}}}\}&\\
&l_{1}:=h_{2};&\\
&\qquad \{{\dblue  [h_{1}=b]^{o_{5}},[l_{1}=a]^{o_{3},o_{5},o_{6}}},{\dgreen [l_{2}=c]^{o_{3}}}\}&\la {\dgreen o_{3}}\\
&h_{2}:=h_{1}&\\
&\qquad \{{\dblue [h_{2}=b]^{o_{5}},[l_{1}=a]^{o_{5},o_{6}}}\}&\\
&l_{2}:=h_{2};&\\
&\qquad \{{\dblue [l_{1}=a]^{o_{5},o_{6}}},{\dgreen [l_{2}=b]^{o_{5}}}\}&\la {\dgreen o_{5}}\\
&l_{2}:=l_{1}&\\
&\qquad \{{\dgreen l_{1}=l_{2}=a}\}&\la {\dgreen o_{6}}
\end{array}
\right .
\]
\end{example}
For instance, at observation point $o_5$, $[l_1 = a]^{o_{5},o_{6}}$ is obtained either via
wlp calculation of $l_2 := l_1$ from $o_{6}$, or via the direct observation of public data in $o_{5}$, while $[l_2 = b]^{o_{5}}$ --- where $b$ is an arbitrary symbolic value --- is only due to the observation of the value $l_2$ in $o_5$. The assertion in $o_3$ --- $[h_1 = b]^{o_{5}}, [l_1 = a]^{o_{3},o_{5},o_{6}}$ --- is obtained by computing the wlp semantics $\Wlp(h_2 := h_1, ([h_2 = b]^{o_{5}}, [l_1 = a]^{o_{3},o_{5},o_{6}}))$, while $ [l_{2}=c]^{o_{3}}$ is the observation in $o_3$. Similarly, we can derive all the other assertions.

It is worth noting, that the attacker is more powerful than the one considered in 
Example~\ref{ex:infrelds}. In fact, in this case, the possibility of observing $l_{2}$ in the program point $o_{5}$ allows to derive the exact (symbolic) value of $h_{2}$ ($h_{2}=b$ where $b$ is the value observed in $o_{5}$).  This was not
possible in Example~\ref{ex:infrelds} based on simple input-output,
since the value of $l_{2}$ was lost in the last assignment.

Now, we can use the information disclosed in each observation point, characterized by computing the wlp semantics, in order to check if the corresponding declassification policy is satisfied or not. This is obtained simply by comparing the abstraction modelling the declassification with the state abstraction
corresponding to the information released in the lattice of abstract interpretations. In the sequel, we will consider a slight extension of a standard imperative language with explicit syntax for declassification as in Jif~\cite{jif} or in the work on
delimited release~\cite{SM04}. Such syntax often takes the form
$l:=\declassify(e(h))$, where $l$ is a public variable, $h$ is 
a secret variable and $e$ is an expression containing the variable $h$. This syntax means that the final value of $e(h)$, corresponding to some information about $h$, can safely made public assigning it to a public variable $l$.
\begin{example}
\label{ex:expdec}
\comment{ \[
P=
\left [
\begin{array}{lll}
{\dred p_{0}}\ra&\qquad \{{\dred  h_{2}=b}, {\dblue [h_{2}\modu 2=a]^{o_{3}}, l_{1}=d,l_{2}=c}\}
&\la {\dgreen o_{0}}\\
&h_{1}:=h_{2};&\\
&\qquad \{{\dblue h_{1}=b, [h_{2}\modu 2=a]^{o_{3}}}\}&\\
&h_{2}:=h_{2}\modu 2;&\\
&\qquad\{ {\dblue h_{1}=b,l_{2}=c},{\dblue [h_{2}=a]^{o_{3}}}\}&\\
&l_{1}:=\declassify^{o_{3}}(h_{2});&\\
&\qquad \{{\dblue  h_{1}=b,l_{1}=a}\},\{{\dgreen l_{2}=c}\}&\la {\dgreen o_{3}}\\
&h_{2}:=h_{1}&\\
&\qquad \{{\dblue h_{2}=b,l_{1}=a}\}&\\
&l_{2}:=h_{2};&\\
&\qquad \{{\dblue l_{1}=a}\},\{{\dgreen l_{2}=b}\}&\la {\dgreen o_{5}}\\
&l_{2}:=l_{1}&\\
&\qquad \{{\dgreen l_{1}=l_{2}=a}\}&\la {\dgreen o_{6}}
\end{array}
\right .
\]}
Consider the program in Example~\ref{ex:infreltr}, with the only difference that in $o_{3}$ the statement $l_{1}:=h_{2}$ is substituted with $l_{1}:=\declassify(h_{2})$. In this case, the corresponding declassification policy is $\phi_{3}=\id_{h_{2}}$. Now we can compare this policy with the private information disclosed in $o_{3}$, which is  $h_2 \modu 2 = a$ ($h_{2}$'s parity) and therefore we conclude that the declassification policy in $o_{3}$ is satisfied because parity is more abstract than identity.
Nevertheless, the program releases the information $h_2=b$ in the program point  $o_{5}$. This means that the security of the programs depends on the kind of localized declassification we consider. For incremental declassification the program is secure since the release is licensed by the previous declassification since the observation point $o_{5}$, where we release information corresponds to the declassification policy $\phi_{5}=\phi_{3}=\id_{h_{2}}$, while for localized declassification the program is insecure since there isn't a corresponding 
declassification policy at $o_5$ that licenses the release. 
\end{example}

\subsection{Certifying abstract non-interference}\label{cert}
In this section we briefly recall the main techniques proposed for certifying different aspects of $\ANI$ in language-based security.

\paragraph{Characterizing attackers.}
In \cite{GM04popl}, a
method for deriving the most concrete output observation for a
program, given the input one, is provided.  In particular, we have interference when the attacker observes too much, namely when it can {\em distinguish} elements due to different confidential inputs. The idea is to collect together all such observations.
We don't mean here to enter in the details of this characterization (see \cite{GM04popl} and \cite{GMadj10}), we simply describe the different steps we perform in order to characterize the most concrete harmless output (dynamic) observer.
By dynamic we mean that we can only characterize the output observation $\rho$ but we cannot characterize the input one, $\eta$, which remain fixed.
\begin{enumerate}
\item First of all we define the sets of elements that, depending on the fixed (abstract) input, must not be distinguished. In other words, if the inputs have to agree simply on the public part, then we collect in the same set all the outputs due to the variation of the confidential inputs. For instance, for the property $\ANI$ $\asecr{\delta}{P}{\rho}$ we are interested in the elements of 
\[
\undist{\ANI(\delta)}{P}\defi\sset{\sset{\grass{P}(h,\delta(l))}{h\in\val^\tH}}{l\in\val^\tL}
\]
Namely, each $Y\in\undist{\ANI(\delta)}{P}$ is a set of elements indistinguishable for guaranteeing non-interference.
\item Let $\Pi$ the property to analyze ($\ANI$ for instance), the maximal harmless observation is the set of all the elements satisfying the predicate $\Secr{P}{\Pi}$:
\[
\forall X\in\wp(\val^{\tL})\:.\:\Secr{P}{\Pi}(X)\ \Lra\ (\exists Z\in\undist{\Pi}{P}.Z\subseteq X\ \Ra\ \forall W\in\undist{\Pi}{P}.W\subseteq X)
\]
Namely the maximal subset of the power set of program states such that any of its elements can only completely contain sets in $\undist{\Pi}{P}$, namely does not break indistinguishable sets.
\end{enumerate}

\paragraph*{Characterizing information released.}\label{infoReleased}
Abstract non-interference can be exploited also for characterizing  the maximal amount of confidential information released, modelled in terms of the most concrete confidential property that can be analyzed by a given attacker \cite{GM04popl}.
 The idea is to find
the maximum amount of information disclosed by computing
the most abstract property on confidential data which has to be
declassified in order to guarantee secrecy. In other words, we characterize  the minimal aggregations of confidential inputs that do not generate any variation in the public output. 
\begin{enumerate}
\item We collect together all  and only the confidential values that do not generate a variation in the output observation, for instance for $\ANI$ we consider
\[
\Pi_\tP(\delta,\rho)\defi
\{\tuple{\{h\in\val^\tH~|~\rho(\fP(\tuple{h,\delta(l)})^\tL)=
A\},\delta(l)}~|~l\in\val^\tL,\ A\in\rho\}
\]
which, for each public input $l$ and for each output observation $A$ gathers all the confidential inputs generating precisely the observation $A$.
\item We use this information for generating the abstraction $\Phi$, of confidential inputs such that any pair of inputs in the same set does not generate a flow, while any pair of inputs belonging to different sets do generate a flow:
\[
\Pi_\tP(\delta,\rho)_{|L}\defi\{H~|~\tuple{H,L}\in\Pi_\tP(\delta,\rho)\}\qquad \Phi\defi\ok{\cP(\bigsqcap_{L\in\delta} \cM(\Pi_\tP(\delta,\rho)_{|L}))}
\]
This domain is the most abstract property that contains all the
possible variations of confidential inputs that generate insecure
information flows, and  it is the most concrete such that each variation
generates a flow. It uniquely represents the confidential
information that flows into the public output \cite{GM04popl}.
\end{enumerate}

\paragraph*{A proof system for Abstract non-interference} In the previous sections, we start from the program and in some way we characterize the abstract non-interference policy that can be satisfied by adding some weakenings, in the attacker model or in the information allowed to flow. In the literature, we can find another certification approach to abstract non-interference. In particular, in \cite{GMproof10} the authors provide a {\em proof system} for certifying $\ANI$ in programming languages. In this way, they can prove, inductively on the syntactic structure of programs, whether a specific $\ANI$ policy is or not satisfied.

\section{A promising approach in other security fields}\label{othSec}
In this section, we provide the informal intuition of how interference properties pervade two challenging IT security fields and therefore of how approaches based on abstract non-interference may be promising into these fields and deserve further research.
\paragraph*{Code injection.}
Among the major security threats in the web application context we have {\em code injection attacks}. Typically, these attacks inject some inputs, interpreted as executable code, which is not adequately checked and which {\em interfere} with the surrounding code in order to disclose information or to take the control of the system. Examples of these kinds of attacks are: SQL injections, Cross-site scripting (XSS), Command injections, etc. In order to protect a program from these kind of attacks we first can verify whether the code coming from potentially untrusted sources may interfere with the sensible components of the system.  The idea of tainted analysis is that of following the potentially untrusted inputs (tainted) checking whether they get in contact with sensible containers or data. Instead, the idea based on (abstract) non-interference \cite{MB10} uses the weakest precondition approach due to completeness characterization of non-interference \cite{GMadj10,BM08} introduced in Sect.~\ref{where} in order to characterize which variables, i.e., information, the injected code may manipulate in order to guarantee the absence of insecure information flows. The idea in this case is to take into account also the implicit protection that the code itself may provide. In \cite{MB10} this idea is precisely formalized and here we describe it by using a XSS attack example.
Suppose a user visits an untrusted web site in order to download a picture, where an attacker has inserted his own malicious Javascript code (Fig.~\ref{XSS}), and execute it on the clients browser \cite{NJKKV07}.
In the following we describe a simplified version. The Javascript code snippet in Fig.~\ref{XSS}  can be used by the attacker to send cookies\footnote{A cookie is a text string stored by a user's web browser. A cookie consists of one or more name-value pairs containing bits of information, sent as an HTTP header by a web server to a web browser (client) and then sent back unchanged by the browser each time it accesses that server. It can be used, for example, for authentication.} of the user to a web server under the attackers control.
\begin{figure}[t]
\begin{flushleft}
\end{flushleft}
\begin{Verbatim}[frame=single,fontsize=\scriptsize]
   var cookie = document.cookie; /*initialisation of the cookie by the server*/
   var dut;
   if (dut == undefined) {dut = "";}
   while(i<cookie.length) {
       switch(cookie[i]) {
         case 'a': dut += 'a'; break;
         case 'b': dut += 'b'; break;
         ...
     }
   }      /* dut contains now copy of cookie*/
   document.images[0].src =  "http://badsite/cookie?" + dut; 
   /* When the user click on the image dut is sent to the web server under the attackers control */
\end{Verbatim}
\caption{Code creating a XSS vulnerability.}\label{XSS}
 \end{figure}
\noindent
By performing the analysis proposed in \cite{MB10} we can show that confidentiality is violated since there is a (implicit) flow of information from private variable \textit{cookie} towards the public variable \textit{dut}.
This corresponds to the sensitive information disclosed when \textit{dut} is initialised to the empty string. We can also formally show that an attacker can exploit \textit{dut} for disclosing other user confidential information.
Suppose, for instance, the attacker to be interested in the \textit{history} object together with its attributes\footnote{The history object allows to navigate through the history of websites that a browser has visited.}.   \vspace{.1cm}\\
\hspace{-.3cm}
\begin{minipage}{.50\textwidth}
An attack could loop over the elements of the \textit{history} object and pass through variable \textit{dut} all the web pages the client has had access to. Consider for example the injection code on the right.
\end{minipage}
\hspace{1cm}
\begin{minipage}{.40\textwidth}
\vspace{-.25cm}
\begin{Verbatim}[frame=single,fontsize=\scriptsize]
     <script language="JavaScript">
     var dut = "";
     for (i=0; i<history.length; i++){
         dut = dut + history.previous;
     }   </script>
\end{Verbatim}
\end{minipage}
\vspace{.1cm}\\
Hence the $\ANI$-based analysis detect a code vulnerability that the attacker can exploit by inserting the code above just before the malicious code (Fig.~\ref{XSS}) in the untrusted web page, getting both \textit{history} and \textit{cookie} through the variable \textit{dut}.
It is worth noting that this approach provides a theoretical model for the existing techniques used in practice for protecting code from XSS attacks \cite{NJKKV07}.

\paragraph*{Code obfuscation.}
Code obfuscation is increasing its relevance in security, providing an effective way for facing the problem of both source code protection and binary protection. 
Code obfuscation~\cite{CTL98} aims at transforming programs in order to make them more difficult to analyze while preserving their functionality.\\
In \cite{JGM12}, the authors introduce a generalized notion of obfuscation which, while preserving denotational semantics, obfuscates, namely transforms, an abstract property of interest, the one to protect (this is a further generalization of the obfuscation defined in \cite{DG09}). Let $P$ be a program, $\grass{P}$ its denotational semantics, and $\grass{P}^\rho$ the abstract semantics to obfuscate, namely $\rho$ is the abstract semantic property to protect, then $\hat{t}$ is an obfuscator w.r.t. $\rho$ if 
\[
\grass{P}=\grass{\hat{t}(P)}\qquad\mbox{and}\qquad\grass{P}^\rho\neq\grass{\hat{t}(P)}^\rho
\]
Based on this notion of obfuscation, in \cite{JGM12} the authors propose to exploit the analysis of code fragments dependencies in order to obfuscate precisely the code {\em relevant} in the computation of the information to protect. This idea is based on the awareness that the
effect of replacing code in programs can only be understood by the notion of dependency \cite{corecalc}.
Hence, the idea is to model the effect of substituting program fragments with {\em equivalent\/}  obfuscated structures by indirectly considering the $\ANI$ framework. This is an indirect application since it goes through the dependency analysis instead of the non-interference analysis, namely it is more related with slicing, which can be weakened in the so called {\em abstract slicing} \cite{MZ08,MN10}, strongly related to $\ANI$\footnote{Program slicing \cite{Weiser81} is a program manipulation technique which
extracts, from programs, statements relevant to a particular
computation.}. \\
The idea is to characterize the {\em dependency\/} between the computation of the values of the different variables $x_i$ and the observable semantics of the program $\tP$. For this reason the notion of {\em stability\/} is defined \cite{JGM12}, which specifies that a program $\tP$ is stable w.r.t.\ the variables $x_i$, when any change of their abstract (computed) values does not change the observable semantics of the program.
%
%


The notion of stability corresponds to $\ANI$, considering $x_i$ as internal, and observing the whole output.
In code obfuscation, we are interested in the negation of this property which is {\em instability}, i.e., there exist abstract values for $x_i$ that cause a variation in the output observation. In other words, a variation of a fixed internal property $\eta_\rel$ induces a variation in the observable output. This means that the code computing the variables $x_i$ inducing instability is the portion of code that it is sufficient to obfuscate in order to modify the observation of the whole program, deceiving any observer. This is a sufficient condition since, due to instability, we guarantee that the output observation changes. Again this is a new approach to code obfuscation that surely deserves further research.
%

\section{Conclusions}
In this paper we provide a survey on the framework of abstract non-interference developed in the last years.
This generalization of language-based non-interference was introduced in 2004 \cite{GM04popl} as a semantic-based weakening of non-interference allowing some information to flow in order to accept programs, such as password checking, that even if considered acceptable where rejected w.r.t. standard non-interference. We have also shown that this new approach to non-interference was indeed strongly related with the main existing approaches to non-interference \cite{M05Aplas}. Since then, we realized that abstract non-interference was a powerful theoretical framework where it was possible to model and understand several aspects of non-interference policies, modeling also declassification and formally relating different non-interference policies, usually independently introduced  \cite{BM08}. This is proved also by the fact that our approach was probably of inspiration in different recent works such as \cite{BNR08}, where declassification policies are modeled in a weakest precondition style, as we've done in \cite{BGM07}, or in \cite{BDG12} where the authors model declassification as a weakening of what the attacker can observe of the input data.\\
Concluding, we believe that abstract non-interference is a really general framework that can provide useful theoretical bases for understanding ``interactions" in different fields of security, as briefly shown in the last section of this paper. For instance, a challenge for the abstract non-interference approach is to extend it in order to cope with other security policies, such as those proposed by McLean \cite{ML96} and or by Mantel \cite{M2000}, which would allow to consider interactions between functions instead of between variables. This kind of properties could be interesting, for instance, in malware detection for analyzing the interference between a malware and the execution environment \cite{DM13id}.\\ 
Finally, we are obviously aware also of existing limitations of this approach. Among all, our approach is still quite far from a practical exploitation: it is powerful for understanding interference phenomena, but the cost of this power is the distance from a real implementation which is our next big challenge to face.

\paragraph*{Acknowledgements.} I would like to thank Dave Schmidt for the time he dedicated to me during my postdoc in Manhattan and for being so inspiring for the research I developed since then.\\ This work was partially supported by the PRIN 2010-2011 project "Security Horizons"

{\small
\bibliographystyle{eptcs}

}
\end{document}